\setlist[enumerate,1]{label={(\alph*)}} 
\tikzset{
	position/.style args={#1:#2 from #3}{at=(#3), shift=(#1:#2)},
	rnode/.style={circle, draw=black, fill=white,  thin, inner sep=0pt, minimum size=4pt},
	fnode/.style={circle, draw=black, fill=black,  thin, inner sep=0pt, minimum size=8pt},
	cnode/.style={circle, draw=white, fill=white,  thin, inner sep=0pt, minimum size=0pt},
	snode/.style={circle, draw=black, fill=black,  thin, inner sep=0pt, minimum size=4pt},
	line/.style = { draw, thick, -{stealth} },
	nonarrow/.style = { draw, thick},
	dline/.style = { draw, thick, -{stealth}, dotted },
	nline/.style={draw, line width=7pt, yellow!40!white, rounded corners=.2*\pgflinewidth},
	n2line/.style={draw, line width=7pt, orange!40!white, rounded corners=.2*\pgflinewidth}
}
\definecolor{hellblau}{rgb}{0.2,0.4,1} 
\definecolor{dunkelblau}{rgb}{0,0,0.8}
\definecolor{dunkelgruen}{rgb}{0,0.5,0}
\definecolor{green}{rgb}{0,0.6,0}
\definecolor{fillblack}{rgb}{0.95,0.95,0.95}
\theoremstyle{plain}
	\newtheorem{satz}{Satz}[] 
	\newtheorem{theorem}[satz]{Theorem}
	\newtheorem*{theorem*}{Theorem}
	\newtheorem{lemma}[satz]{Lemma}
	\newtheorem*{lemma*}{Lemma}
	\newtheorem{corollary}[satz]{Corollary}
\theoremstyle{remark}
	\newtheorem*{remark}{\textbf{Remark}} 
\theoremstyle{definition}
	\newtheorem{definition}[satz]{Definition}
	\newtheorem*{conjecture*}{Conjecture}
\newcommand{\red}{{\color{red}1}}
\newcommand{\green}{{\color{dunkelgruen}2}}
\newcommand{\blue}{{\textcolor{blue}3}}
\title{Toward Grünbaum's Conjecture}
\author{Christian Ortlieb\\Institute of Computer Science\\University of Rostock\\Germany\thanks{This research is supported by the grant SCHM 3186/2-1 (401348462) from the Deutsche Forschungsgemeinschaft (DFG, German Research Foundation).}
	\and Jens M. Schmidt\\Institute of Computer Science\\University of Rostock\\Germany\footnotemark[1]}
\date{}
\begin{document}
	\maketitle
	\begin{abstract}
		Given a spanning tree $T$ of a planar graph $G$, the \emph{co-tree} of $T$ is the spanning tree of the dual graph $G^*$ with edge set $(E(G)-E(T))^*$.
		Grünbaum conjectured in 1970 that every planar 3-connected graph $G$ contains a spanning tree $T$ such that both $T$ and its co-tree have maximum degree at most~3.
		
		While Grünbaum's conjecture remains open, Biedl proved that there is a spanning tree $T$ such that $T$ and its co-tree have maximum degree at most~5.
		By using new structural insights into Schnyder woods, we prove that there is a spanning tree $T$ such that $T$ and its co-tree have maximum degree at most~4.
	\end{abstract}

	\section{Introduction}
	Let a $k$-\emph{tree} be a spanning tree whose maximum degree is at most~$k$. In 1966, Barnette proved the fundamental theorem that every planar 3-connected graph contains a 3-tree~\cite{Barnette1966}. Both assumptions in this theorem are essential in the sense that the statement fails for arbitrary non-planar graphs (as the arbitrarily high degree in any spanning tree of the complete bipartite graphs $K_{3,n-3}$ show) as well as for graphs that are not 3-connected (as the planar graphs $K_{2,n-2}$ show).
	
	Since then, Barnette's theorem has been extended and generalized in several directions.
	First, one may try to relax the 3-connectedness assumption: Indeed, Barnette's original proof holds for the slightly more general class of \emph{circuit graphs}\footnote{that is, planar internally 3-connected graphs with a designated outer face}, and may also be extended to arbitrary planar graphs $G$ in form of a local version that guarantees for every 3-connected\footnote{$X \subseteq V(G)$ such that $G$ contains three internally vertex-disjoint paths between every two vertices of $X$} vertex set $X$ of $G$ a (not necessarily spanning) tree of $G$ that has maximum degree at most~3 and contains $X$~\cite{Boehme2024}. Alternatively, one may relax the planarity assumption. Ota and Ozeki~\cite{Ota2012} proved that for every $k \geq 3$, every 3-connected graph with no $K_{3,k}$-minor contains a $(k-1)$-tree if $k$ is even and a $k$-tree if $k$ is odd. Further sufficient conditions for the existence of $k$-trees may be found in the survey~\cite{Ozeki2011}.
	
	Second, one may see spanning trees as $1$-connected spanning subgraphs and generalize these to $k$-connected spanning subgraphs for any $k > 1$. In this direction, Barnette~\cite{Barnette1994} proved that every planar 3-connected planar graph contains a 2-connected spanning subgraph whose maximum degree is at most~15, and Gao~\cite{Gao1995b} improved this result subsequently to the tight bound of maximum degree at most~6. Interestingly, Gao showed that his result holds as well for the 3-connected graphs that are embeddable on the projective plane, the torus or the Klein bottle.
	
	Third, one may try to strengthen the 3-tree in question. A recent alternative proof of Barnette's theorem based on canonical orderings by Biedl~\cite[Corollary~1]{Biedl2014} (which was also mentioned by Chrobak and Kant) reveals that further degree constraints may be imposed on the 3-tree for prescribed vertices (for example, two vertices of a common face may be forced to be leaves of the tree). To strengthen this further, Barnette's theorem can be seen as a side-result of a structure obtained in Hamiltonicity studies from generalizing the theory of Tutte paths and Tutte cycles: Gao and Richter~\cite{Gao1994} proved that every planar 3-connected graph contains a $2$-\emph{walk}, which is a walk that visits every vertex exactly once or twice. By going along such 2-walks and omitting the last edge whenever a vertex is revisited, these 2-walks imply the existence of 3-trees. Here, planar 3-connected graphs may again be replaced with circuit graphs, and all results have been successfully lifted to higher surfaces. Even more, the surfaces on which every embedded 3-connected graph contains a 2-walk have been classified~\cite{Brunet1995}.
	
	Perhaps one of the most severe strengthenings of the 3-tree in question is a long-standing and to the best of our knowledge still open conjecture made by Grünbaum in~1970. Since the planar dual $G^*=(V^*,E^*)$ of every (simple) planar 3-connected graph $G$ is again planar and 3-connected, $G^*$ contains a 3-tree as well. By the well-known cut-cycle duality, any spanning tree $T$ of $G$ implies that also $\neg T^* := (V^*,(E(G)-E(T))^*)$ is a spanning tree of $G^*$; we call $\neg T^*$ the \emph{co-tree} of $T$. Taking the best of these two worlds, Grünbaum made the following conjecture.

	\begin{conjecture*}[{Grünbaum~\cite[p.~1148]{Grunbaum1970}, 1970}]
		Every planar 3-connected graph $G$ contains a 3-tree $T$ whose co-tree $\neg T^*$ is also a 3-tree.
	\end{conjecture*}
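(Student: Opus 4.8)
The plan is to build the desired $3$-tree directly from the combinatorial structure of a Schnyder wood of $G$, exploiting the fact that a Schnyder wood encodes a primal object and a dual object simultaneously. Recall that a Schnyder wood orients and $3$-colors the interior edges of $G$ so that every interior vertex emits exactly one edge of each color, and that the same data induces a consistent labeling of the incidences (corners) of each face. The two conditions we must meet are local and of the same flavor: for every vertex $v$ at most three incident edges lie in $T$, and for every face $\phi$ at most three boundary edges lie outside $T$ (equivalently, $\neg T^*$ has degree at most $3$ at the dual vertex $\phi$). Since the average vertex-degree in any spanning tree of an $n$-vertex graph is $2(n-1)/n<2$ and, dually, the average face-degree in any co-tree is likewise below $2$, both constraints are slack on average; the entire difficulty is to cap the two maxima at the tight value $3$ at the same time.

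First I would fix a Schnyder wood $S$ of $G$ and translate both target inequalities into conditions on $S$. Using the out-degree-one-per-color property, both the tree-degree $t(v)$ and the co-tree face-degree $c(\phi)$ can be read off from the colors and orientations of the edges around $v$ and around $\phi$. The aim is to designate each edge as \emph{tree} or \emph{non-tree} by a rule that is determined locally by its Schnyder color, its orientation, and the labels of its two endpoints and two incident faces, so that the rule automatically yields $t(v)\le 3$ and $c(\phi)\le 3$. Second, I would prove that the edge set selected by this rule is in fact a spanning tree: connectivity and acyclicity should follow from the acyclicity of the three Schnyder trees together with the path-partition property of Schnyder woods, in the same spirit in which the degree-$\le 4$ construction of this paper is certified.

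The main obstacle, and the reason this remains a conjecture, is enforcing the degree-$3$ bound on both sides simultaneously. In a Schnyder wood the out-degrees are perfectly controlled (one per color) but the in-degrees are not, so a vertex of large in-degree forces many of its edges onto the same side; every local choice that removes a tree edge at an overloaded vertex tends to add a non-tree edge to an already-tight face, and conversely. Thus no purely local selection rule will in general respect both caps, and the step that fails when one tries to push the degree-$4$ argument down to degree $3$ is precisely this coupling. I expect that closing the gap requires a global exchange argument: starting from any spanning tree with $t(v)\le 3$ (which exists by Barnette's theorem), one repeatedly swaps a tree edge for a non-tree edge along a fundamental cycle or cut so as to decrease an overloaded face-degree without creating a new vertex of tree-degree $4$, and argues by a potential or discharging function that the process terminates with both maxima equal to $3$. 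Designing a potential that is monotone under such swaps, and ruling out the ``deadlock'' configurations in which no legal swap exists, is the crux on which a full proof of Grünbaum's conjecture hinges.
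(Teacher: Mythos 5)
This statement is Gr\"unbaum's conjecture, which the paper explicitly leaves open: the paper contains no proof of it, and instead proves the weaker Theorem~\ref{thm_max_deg_4} (a 4-tree whose co-tree is a 4-tree). Your submission, to its credit, recognizes this --- but as a consequence it is not a proof at all. It is a research plan with every load-bearing step missing. Concretely: (1) the ``local rule'' that is supposed to designate each edge as tree or non-tree from its Schnyder color, orientation, and incident labels is never defined, and in your third paragraph you yourself argue that no purely local rule can satisfy both degree caps simultaneously, so the first half of the plan is withdrawn before it is carried out; (2) the proposed rescue --- a global exchange argument swapping tree and non-tree edges along fundamental cycles or cuts, governed by a monotone potential or discharging function --- is only hypothesized: no potential is defined, no legal-swap lemma is proved, and the ``deadlock'' configurations you would need to rule out are exactly the hard cases, left entirely untouched; (3) even for the undefined selection rule, acyclicity and connectivity are asserted to follow ``in the same spirit'' as the paper's argument, with no argument given. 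A sketch whose crux step is flagged by its own author as the open problem is a correct diagnosis, not a proof.

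For comparison with what the paper actually does: its Theorem~\ref{thm_max_deg_4} starts from the \emph{minimal} Schnyder wood $S$, takes as candidate graph $H$ the bidirected edges of $S$ (so that both $H$ and $\neg H^*$ have maximum degree at most~3 for free, by Definition~\ref{def:Schnyderwood}\ref{def:Schnyderwood3}), and then must destroy the cycles of $H$ by deleting an edge set $D$ (and dually $D'$). The edges of $D^*$ added to the dual side are precisely what pushes face degrees from~3 up to~4, and the structural Lemma~\ref{lem_all_edges_right} (no clockwise cycles in the completion $\widetilde{G}_S$) is what keeps the damage to one extra unit per face. This is exactly the primal--dual coupling you identified as the obstruction; the paper's contribution is a careful bookkeeping scheme ($\mathcal{P}_{max}$, $\mathcal{P}_{cover}$, the case distinction on where the deleted edge sits) that caps the overflow at one, whereas capping it at zero --- which is what the conjecture demands --- remains open. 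Your proposal neither reproduces this argument nor supplies an alternative one, so it cannot be accepted as a proof of the statement; at best it is a reasonable survey of why the degree-4 barrier is where it is.
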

	
	While Grünbaum's conjecture is to the best of our knowledge still unsolved, progress has been made by Biedl~\cite{Biedl2014}, who proved the existence of a 5-tree, whose co-tree is a 5-tree. We prove the existence of a 4-tree, whose co-tree is a 4-tree. Our methods exploit insights into the structure of Schnyder woods. We discuss Schnyder woods, their lattice structure and ordered path partitions in Section~\ref{sec:preliminaries}, our main result in Section~\ref{sec:maxdeg4} and computational aspects of this main result in Section~\ref{sec:computational}.

	\section{Schnyder Woods and Ordered Path Partitions}\label{sec:preliminaries}
	We only consider simple undirected graphs. A graph is \emph{plane} if it is planar and embedded into the Euclidean plane. The \emph{neighborhood of a vertex set} $A$ is the union of the neighborhoods of vertices in $A$. Although parts of this paper use orientation on edges, we will always let $vw$ denote the undirected edge $\{v,w\}$.
	
	\subsection{Schnyder Woods.}
	Let $\sigma := \{r_1,r_2,r_3\}$ be a set of three vertices of the outer face boundary of a plane graph $G$ in clockwise order (but not necessarily consecutive). We call $r_1$, $r_2$ and $r_3$ \emph{roots}. The \emph{suspension} $G^\sigma$ of $G$ is the graph obtained from $G$ by adding at each root of $\sigma$ a half-edge pointing into the outer face.
	A plane graph $G$ is $\sigma$-\emph{internally 3-connected} if the graph obtained from the suspension $G^\sigma$ of $G$ by making the three half-edges incident to a common new vertex inside the outer face is 3-connected. Note that the class of $\sigma$-internally 3-connected plane graphs properly contains all 3-connected plane graphs.
	
	\begin{definition}\label{def:Schnyderwood}
		Let $\sigma = \{r_1,r_2,r_3\}$ and $G^\sigma$ be the suspension of a $\sigma$-internally 3-connected plane graph $G$. A \emph{Schnyder wood} of $G^\sigma$ is an orientation and coloring of the edges of $G^\sigma$ (including the half-edges) with the colors \red,\green,\blue\ ({\color{red}red}, {\color{dunkelgruen}green}, {\color{blue}blue}) such that
		\begin{enumerate}
			\item Every edge $e$ is oriented in one direction (we say $e$ is \emph{unidirected}) or in two opposite directions (we say $e$ is \emph{bidirected}). Every direction of an edge is colored with one of the three colors \red,\green,\blue\ (we say an edge is $i$-\emph{colored} if one of its directions has color $i$) such that the two colors $i$ and $j$ of every bidirected edge are distinct (we call such an edge $i$-$j$-\emph{colored}). Similarly, a unidirected edge whose direction has color $i$ is called $i$-\emph{colored}. Throughout the paper, we assume modular arithmetic on the colors \red,\green,\blue\ in such a way that $i+1$ and $i-1$ for a color $i$ are defined as $(i \mod 3) +1$ and $(i+1 \mod 3) +1$. For a vertex $v$, a uni- or bidirected edge is \emph{incoming} ($i$-colored) \emph{in} $v$ if it has a direction (of color $i$) that is directed toward $v$, and \emph{outgoing} ($i$-colored) \emph{of} $v$ if it has a direction (of color $i$) that is directed away from $v$.\label{def:Schnyderwood1}
			\item For every color~$i$, the half-edge at $r_i$ is unidirected, outgoing and $i$-colored.\label{def:Schnyderwood2}
			\item Every vertex $v$ has exactly one outgoing edge of every color. The outgoing \red-, \green-, \blue-colored edges $e_\red,e_\green,e_\blue$ of $v$ occur in clockwise order around $v$. For every color~$i$, every incoming $i$-colored edge of $v$ is contained in the clockwise sector around $v$ from $e_{i+1}$ to $e_{i-1}$ (see Figure~\ref{fig:SchnyderWoodCondition}).\label{def:Schnyderwood3}
			\item No inner face boundary contains a directed cycle (disregarding possible opposite edge directions) in one color.\label{def:Schnyderwood4}
		\end{enumerate}
	\end{definition}
	
	\begin{figure}[!htb]
		\centering
		\begin{tikzpicture}[scale = 0.25]
			\node[rnode] (0) at (0,0) {};
			
			\draw [line, red] (0) to (90:10);
			\draw [line, blue] (0) to (210:10);
			\draw [line, green] (0) to (330:10);
			
			\draw [line, red] (240:6) to (0);
			\draw [line, red] (270:6) to (0);
			\draw [line, red] (300:6) to (0);
			
			\draw [line, blue] (10:6) to (0);
			\draw [line, blue] (50:6) to (0);
			
			\draw [line, green] (120:6) to (0);
			\draw [line, green] (150:6) to (0);
			\draw [line, green] (180:6) to (0);
			
			\node[] (1) at (82:9) {1};
			\node[] (2) at (338:9) {2};
			\node[] (3) at (202:9) {3};
			
			\node[] (11) at (232:6) {1};
			\node[] (12) at (278:6) {1};
			\node[] (13) at (308:6) {1};
			
			\node[] (21) at (112:6) {2};
			\node[] (22) at (158:6) {2};
			\node[] (23) at (188:6) {2};
			
			\node[] (31) at (2:6) {3};
			\node[] (32) at (58:6) {3};

		\end{tikzpicture}
		\caption{Properties of Schnyder woods. Condition~\ref{def:Schnyderwood}\ref{def:Schnyderwood3} at a vertex.}
		\label{fig:SchnyderWoodCondition}
	\end{figure}
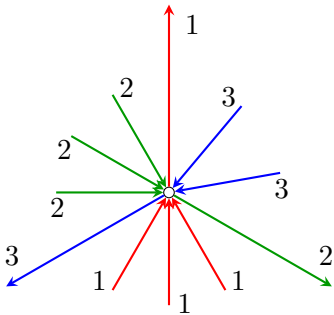
	
	For a Schnyder wood and color~$i$, let $T_i$ be the directed graph that is induced by the directed edges of color~$i$. The following result justifies the name of Schnyder woods.
	
	\begin{lemma}[\cite{Schnyder1990,Felsner2004a}]
		For every color~$i$ of a Schnyder wood of a graph $G$, $T_i$ is a directed spanning tree of $G$ in which all edges are oriented to the root $r_i$.
		\end{lemma}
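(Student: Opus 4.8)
The plan is to prove that each $T_i$ is a spanning tree directed toward $r_i$ by establishing three things: that every non-root vertex has an outgoing $i$-colored edge (so $T_i$ reaches everything), that $T_i$ is acyclic, and that it is connected with the claimed root, which together force a tree structure. By Condition~\ref{def:Schnyderwood}\ref{def:Schnyderwood3}, every vertex $v$ has exactly one outgoing edge of color~$i$; counting the half-edge at $r_i$ as its unique outgoing $i$-colored edge (Condition~\ref{def:Schnyderwood}\ref{def:Schnyderwood2}), this gives $T_i$ an out-degree of exactly one at every vertex except possibly at $r_i$, where the outgoing $i$-edge is the half-edge leaving the graph. Hence as a subgraph of $G$ (ignoring the half-edge), $T_i$ has exactly $|V(G)|-1$ edges and out-degree one at every non-root vertex and out-degree zero at $r_i$.

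The crux is acyclicity. First I would rule out directed cycles of color~$i$ that bound an inner face: this is exactly Condition~\ref{def:Schnyderwood}\ref{def:Schnyderwood4}. The hard part will be extending this to arbitrary (not necessarily facial) monochromatic directed cycles. My approach is to take a putative $i$-colored directed cycle $C$ of minimal enclosed area and derive a contradiction. Let $u$ be a vertex strictly inside $C$ (if $C$ is chord-free and encloses no vertex it must itself bound an inner face, contradicting Condition~\ref{def:Schnyderwood}\ref{def:Schnyderwood4}); following outgoing $i$-colored edges from $u$ via the out-degree-one property yields a directed $i$-path that, since the interior is finite and acyclic cycles smaller than $C$ are excluded by minimality, must eventually cross or reach $C$. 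The interplay forced by the cyclic order of Condition~\ref{def:Schnyderwood}\ref{def:Schnyderwood3} — namely that incoming $i$-edges at a vertex lie in the sector between the outgoing edges of colors $i+1$ and $i-1$ — constrains how the $i$-colored edges can turn, and I would use this planarity-plus-ordering constraint to show that an innermost monochromatic cycle would have to enclose a strictly smaller one or a facial one, contradicting minimality or Condition~\ref{def:Schnyderwood}\ref{def:Schnyderwood4}.

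Once acyclicity is established, the argument concludes cleanly. A finite directed graph in which every vertex except one root has out-degree exactly one, which is acyclic, and in which the root has out-degree zero, is necessarily a directed tree with all edges oriented toward the root: starting from any vertex and repeatedly following the unique outgoing edge produces a directed walk that cannot repeat a vertex (acyclicity) and hence must terminate, and the only possible termination is at $r_i$ (the unique vertex without an outgoing edge in $G$). This shows every vertex has a directed path to $r_i$, giving connectivity, and combined with the edge count $|V(G)|-1$ confirms $T_i$ is a spanning tree with the stated orientation.

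I expect the main obstacle to be the acyclicity argument for non-facial monochromatic cycles, since Condition~\ref{def:Schnyderwood}\ref{def:Schnyderwood4} only directly forbids facial ones; bridging this gap requires carefully combining the local angular ordering of Condition~\ref{def:Schnyderwood}\ref{def:Schnyderwood3} with the plane embedding, and this is precisely the nontrivial structural content of the lemma. The remaining bookkeeping — the out-degree count and the tree-characterization — is routine.
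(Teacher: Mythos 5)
The paper itself does not prove this lemma---it imports it from Schnyder and Felsner---so your attempt must be judged as a reproof of the cited result. Your bookkeeping is fine: Condition~\ref{def:Schnyderwood}\ref{def:Schnyderwood3} gives every vertex exactly one outgoing $i$-colored edge, Condition~\ref{def:Schnyderwood}\ref{def:Schnyderwood2} lets the half-edge play that role at $r_i$, and since the two directions of a bidirected edge have distinct colors (Condition~\ref{def:Schnyderwood}\ref{def:Schnyderwood1}), no edge is the outgoing $i$-edge of both endpoints, so $T_i$ has exactly $|V(G)|-1$ edges; the closing step ``acyclic functional digraph with a unique sink is an in-tree'' is also standard. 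The genuine gap is exactly where you place it, and your plan for closing it does not work. Take your minimal $i$-colored cycle $C$ and an interior vertex $u$: the $i$-path from $u$ indeed cannot cross $C$, but when it reaches a vertex $v \in C$ it simply \emph{merges} into $C$, because the unique outgoing $i$-edge at $v$ is the cycle edge itself. That configuration is fully consistent with all four conditions and produces neither a smaller monochromatic cycle nor a facial one---it is the generic picture of a functional digraph flowing into its cycle. So the sentence ``I would use this planarity-plus-ordering constraint to show that an innermost monochromatic cycle would have to enclose a strictly smaller one or a facial one'' is not a proof step; it is a restatement of the entire content of the lemma. (Even your parenthetical claim that a vertex-free minimal cycle bounds a face needs an argument excluding chords, which again requires Condition~\ref{def:Schnyderwood}\ref{def:Schnyderwood3}.)

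The deeper reason your restricted minimality argument stalls is that monochromatic cycles are not closed under the surgery such arguments need. The proofs in the literature (and the paper's Lemma~\ref{lem_felsner_no_cycle}, due to Felsner) establish the stronger statement that $T_i \cup T_{i-1}^{-1} \cup T_{i+1}^{-1}$ has no directed cycle, and run the minimal-counterexample argument in that larger class: by Condition~\ref{def:Schnyderwood}\ref{def:Schnyderwood3}, each vertex of a minimal cycle $C$ has an outgoing edge of a suitable \emph{other} color (e.g.\ color $i+1$ if $C$ is clockwise and $i$-colored) pointing weakly into the enclosed region; following such edges until they return to $C$ and splicing the return path with an arc of $C$ yields a strictly smaller directed cycle \emph{in the same mixed class}---the contradiction. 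In your setting the interior edges you are forced to follow have colors other than $i$, so no smaller \emph{monochromatic} cycle arises. Degenerate cases also defeat naive color-switching: if every edge of $C$ is bidirected $i$-$(i+1)$-colored, the reversal of $C$ is an $(i+1)$-colored cycle enclosing the very same region, so ``switch color and recurse on a smaller region'' makes no progress. To repair the proof, establish the mixed-cycle statement of Lemma~\ref{lem_felsner_no_cycle} first and then specialize it to one color; as written, the crux of the lemma remains unproven.
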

		
		For a directed graph $H$ we denote by $H^{-1}$ the graph obtained from $H$ by reversing the orientation of all edges.
		
		\begin{lemma}[Felsner \cite{Felsner2001}]
			$T_i \cup T^{-1}_{i-1} \cup T^{-1}_{i+1}$ does not have any oriented cycle.
			\label{lem_felsner_no_cycle}
		\end{lemma}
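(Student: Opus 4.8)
The plan is to argue by contradiction. Suppose $D := T_i \cup T_{i-1}^{-1} \cup T_{i+1}^{-1}$ contains a directed cycle and, among all of them, pick a simple one $C$ of length at least three (a pair of antiparallel arcs coming from a single bidirected edge is not regarded as a cycle; this degenerate case is discussed below). The first step is purely local. Fix a vertex $v$ and read off from Definition~\ref{def:Schnyderwood}\ref{def:Schnyderwood3} the clockwise arrangement of edges around $v$: the outgoing edges $e_i,e_{i+1},e_{i-1}$ occur in this clockwise order, the incoming $(i-1)$-colored edges lie in the sector between $e_i$ and $e_{i+1}$, the incoming $i$-colored edges between $e_{i+1}$ and $e_{i-1}$, and the incoming $(i+1)$-colored edges between $e_{i-1}$ and $e_i$. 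Translating the three arc types of $D$ (a forward $i$-edge, a reversed $(i-1)$-edge, a reversed $(i+1)$-edge) into this picture, one checks that the arcs of $D$ leaving $v$ occupy the contiguous clockwise arc formed by the incoming-$(i+1)$ sector, then $e_i$, then the incoming-$(i-1)$ sector, while the arcs of $D$ entering $v$ occupy the complementary contiguous arc $e_{i+1}$, incoming-$i$ sector, $e_{i-1}$. Hence $D$ is bimodal, and $C$ may be treated as a simple closed curve with a well-defined interior.

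For the global contradiction I would use Schnyder's region coordinates. For a vertex $v$ and color $j$, let $P_j(v)$ be the directed path from $v$ to $r_j$ in $T_j$, which exists by the preceding spanning-tree lemma. The crucial structural fact is that $P_1(v),P_2(v),P_3(v)$ pairwise meet only in $v$, so that they split the suspension into three regions; let $R_i(v)$ be the one bounded by $P_{i-1}(v)$, $P_{i+1}(v)$ and the arc of the outer boundary avoiding $r_i$, and let $a_i(v)$ be the number of bounded faces contained in $R_i(v)$. I would then prove the monotonicity statement that $a_i$ strictly increases along every arc of $D$: along a forward $i$-edge $u \to w$ the region $R_i$ grows because $w$ lies closer to $r_i$ than $u$, and along a reversed $(i\pm 1)$-edge $u \to y$ the step crosses the edge $uy$ away from $r_{i\pm 1}$ and again enlarges $R_i$ by at least one face. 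Since $a_i$ would have to return to its initial value after traversing $C$, no such cycle can exist.

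The main obstacle is the structural input that $P_1(v),P_2(v),P_3(v)$ are internally disjoint, so that the regions and the counts $a_i(v)$ are well defined; this is exactly where Definition~\ref{def:Schnyderwood}\ref{def:Schnyderwood3} together with the planarity condition~\ref{def:Schnyderwood}\ref{def:Schnyderwood4} enter, and verifying that at least one face is gained at each step (strictness) requires care near the roots and along the bounding paths. A second point to address is that a bidirected edge whose two colors are $i-1$ and $i+1$ contributes \emph{both} of its orientations to $D$, hence a pair of antiparallel arcs on which $a_i$ is constant; such pairs are excluded by the convention that an oriented cycle has length at least three and distinct underlying edges. Since the strict monotonicity of $a_i$ holds on every non-degenerate arc, any cycle of length at least three that uses such an arc is immediately ruled out, and a short supplementary argument (using the tree distance to $r_{i-1}$, say) disposes of a hypothetical cycle built solely from these degenerate arcs. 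I expect the internal disjointness of the root paths and the resulting strict monotonicity of $a_i$ to be the technical heart of the proof, with the bimodality of the first step serving as a convenient consistency check on the orientation $D$.
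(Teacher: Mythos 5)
The paper itself gives no proof of this lemma: it is imported verbatim from Felsner~\cite{Felsner2001}, where it is established by a planarity argument working directly from the local conditions of Definition~\ref{def:Schnyderwood}\ref{def:Schnyderwood3} and~\ref{def:Schnyderwood}\ref{def:Schnyderwood4} (essentially: take an oriented cycle enclosing the fewest faces and derive a contradiction by following an outgoing edge of the third color into its interior). No region counts appear there, and this is not an accident, as explained next.

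The genuine gap in your plan is circularity. The entire region machinery you invoke --- internal disjointness of $P_1(v),P_2(v),P_3(v)$, well-definedness of $R_i(v)$, and strict monotonicity of the face counts $a_i$ along arcs --- sits logically \emph{downstream} of this very lemma in Schnyder--Felsner theory. Concretely: if $P_{i-1}(v)$ and $P_{i+1}(v)$ met at a vertex $u\neq v$, the two $v$--$u$ segments would form an oriented cycle in $T_{i+1}\cup T_{i-1}^{-1}$, which is a subgraph of $T_{i+1}\cup T_{i}^{-1}\cup T_{i-1}^{-1}$, i.e.\ of the lemma's digraph with $i$ replaced by $i+1$. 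Since the lemma is asserted for every color, the ``structural input'' you defer is not an auxiliary fact but the statement itself up to a cyclic relabeling of colors. The same problem infects your monotonicity step: to show that a forward $i$-arc $u\to w$ enlarges $R_i$, you must first know that $w$ does not lie on $P_{i-1}(u)\cup P_{i+1}(u)$; if it did, the edge $uw$ together with the corresponding path segment reversed would again be an oriented cycle of exactly the forbidden kind. So each point you flag as ``requiring care'' is equivalent to what is to be proven, and the proposal as written does not constitute a proof. Two smaller remarks: your local bimodality analysis is correct; and your treatment of the degenerate $\{i-1,i+1\}$-bidirected edges can be repaired more cleanly than by a distance argument --- a cycle consisting solely of such edges would be a cycle whose underlying edges all lie in the tree $T_{i-1}$, which is impossible (note that on those arcs $a_i$ is only constant, never increasing, so they cannot be ignored by strict monotonicity alone).
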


		\subsection{Dual Schnyder Woods.}
		Let $G$ be a $\sigma$-internally 3-connected plane graph. Any Schnyder wood of $G^\sigma$ induces a Schnyder wood of a slightly modified planar dual of $G^\sigma$ in the following way~\cite{DiBattista1999,Felsner2004} (see~\cite[p.~30]{Kant1996} for an earlier variant of this result given without proof). As common for plane duality, we will use the plane dual operator $^*$ to switch between primal and dual objects (also on sets of objects).
		
		Extend the three half-edges of $G^\sigma$ to non-crossing infinite rays and consider the planar dual of this plane graph. Since the infinite rays partition the outer face $f$ of $G$ into three parts, this dual contains a triangle with vertices $b_1$, $b_2$ and $b_3$ instead of the outer face vertex $f^*$ such that $b^*_i$ is not incident to $r_i$ for every $i$ (see Figure~\ref{fig:completion}). Let the \emph{suspended dual} $G^{\sigma^*}$ of $G$ be the graph obtained from this dual by adding at each vertex of $\{b_1,b_2,b_3\}$ a half-edge pointing into the outer face.
		
		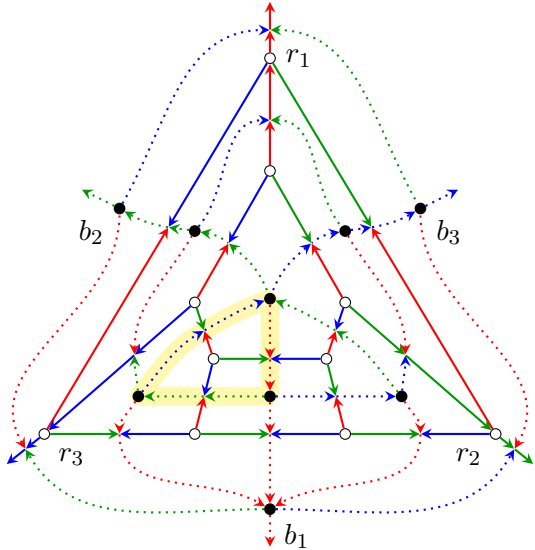
\begin{figure}[!htb]
			\centering
			\begin{tikzpicture}[scale = 0.5]
				\node[rnode, label =right:{$r_1$}] (r1) at (0,10) {};
				\node[rnode, label =-135:{$r_2$}] (r2) at (6,0) {};
				\node[rnode, label =-45:{$r_3$}] (r3) at (-6,0) {};
				\node[rnode] (1) at (-2,0) {};
				\node[rnode] (2) at (2,0) {};
				\node[rnode] (3) at (-1.5,2) {};
				\node[rnode] (4) at (1.5,2) {};
				\node[rnode] (5) at (-2,3.5) {};
				\node[rnode] (7) at (2,3.5) {};
				\node[rnode] (9) at (0,7) {};
				
				\node[snode, label =-45:{$b_1$}] (b1) at(0,-2) {};
				\node[snode, label =190:{$b_2$}] (b2) at(-4,6) {};
				\node[snode, label =-10:{$b_3$}] (b3) at(4,6) {};
				\node[snode] (d1) at(-3.5,1) {};
				\node[snode] (d2) at(0,1) {};
				\node[snode] (d3) at(3.5,1) {};
				\node[snode] (d4) at(-2,5.4) {};
				\node[snode] (d5) at(0,3.6) {};
				\node[snode] (d6) at(2,5.4) {};
				
				\foreach \z/\x/\y in {1/r3/1, 2/1/2, 3/2/r2, 4/1/3, 5/2/4, 8/3/5, 9/4/7, m/3/4}{
					\node[cnode] (c\z) at ($(\x) !0.5! (\y)$) {};
				}
				\foreach \z/\x/\y in {6/r3/5, 7/r2/7}{
					\node[cnode] (c\z) at ($(\x) !0.6! (\y)$) {};
				}
				\foreach \z/\x/\y in {12/r3/r1, 13/r2/r1, 10/9/5, 11/9/7, 14/r1/9}{
					\node[cnode] (c\z) at ($(\x) !0.55! (\y)$) {};
				}

				\foreach \x/\y in {c14/r1, 9/c14, r2/c13, r3/c12, 5/c10, 7/c11, 1/c4, 2/c5, 3/c8, 4/c9}{
					\draw[line, red] (\x) to (\y);
				}
				\foreach \x/\y in {c6/r3, 5/c6, r1/c12, 1/c1, 2/c2, r2/c3, 9/c10, 7/c9, 3/c4, 4/cm}{
					\draw[line, blue] (\x) to (\y);
				}
				\foreach \x/\y in {c7/r2, 7/c7, r1/c13, r3/c1, 1/c2, 2/c3, 9/c11, 5/c8, 3/cm, 4/c5}{
					\draw[line, green] (\x) to (\y);
				}

				\draw [dline, green] (c4) to (d1);
				\draw [dline, green] (d2) to (c4);
				\draw [dline, blue] (d2) to (c5);
				\draw [dline, blue] (c5) to (d3);
				\draw [dline, red] (d1) to [in=90, out=-120] (c1);
				\draw [dline, red] (c1) to [in=150, out=-90] (b1);
				\draw [dline, red] (d2) to (c2);
				\draw [dline, red] (c2) to (b1);
				\draw [dline, red] (d3) to [in=90, out=-60] (c3);
				\draw [dline, red] (c3) to [in=30, out=-90] (b1);
				\draw [dline, green] (d1) to [in=-90, out=95] (c6);
				\draw [dline, red, {stealth}-] (c6) to [in=-120, out=90] (d4);
				\draw [dline, blue] (d3) to [in=-90, out=85] (c7);
				\draw [dline, red, {stealth}-] (c7) to [in=-60, out=90] (d6);
				\draw [dline, blue] (d1) to [in=-150, out=50] (c8);
				\draw [dline, blue] (c8) to [in=-160, out=30] (d5);
				\draw [dline, red, {stealth}-] (d2) to (cm);
				\draw [dline, red, {stealth}-] (cm) to (d5);
				\draw [dline, green] (d3) to [in=-30, out=130] (c9);
				\draw [dline, green] (c9) to [in=-20, out=150] (d5);
				\draw [dline, green, {stealth}-] (d4) to [in=140, out=-10] (c10);
				\draw [dline, green, {stealth}-] (c10) to [in=110, out=-40] (d5);
				\draw [dline, blue, {stealth}-] (d6) to [in=40, out=-170] (c11);
				\draw [dline, blue, {stealth}-] (c11) to [in=70, out=-140] (d5);
				\draw [dline, green] (d4) to [in=-10, out=170] (c12);
				\draw [dline, green] (c12) to [in=-30, out=170] (b2);
				\draw [dline, blue] (d6) to [in=-170, out=10] (c13);
				\draw [dline, blue] (c13) to [in=-150, out=10] (b3);
				\draw [dline, blue] (d4) to [in=180, out=50] (c14);
				\draw [dline, green, {stealth}-] (c14) to [in=130, out=0] (d6);
				
				\node[cnode] (cr3) at ($(r3) !0.5! (-7,-0.8)$) {};
				\node[cnode] (cr1) at ($(r1) !0.5! (0,11.5)$) {};
				\node[cnode] (cr2) at ($(r2) !0.5! (7,-0.8)$) {};
				
				\draw [dline, green] (b1) to [in=-60, out=180] (cr3);
				\draw [dline, red, {stealth}-] (cr3) to [in=-90, out=120] (b2);
				\draw [dline, blue] (b1) to [in=-120, out=0] (cr2);
				\draw [dline, red, {stealth}-] (cr2) to [in=-90, out=60] (b3);		
				\draw [dline, blue] (b2) to [in=-180, out=60] (cr1);
				\draw [dline, green, {stealth}-] (cr1) to [in=120, out=0] (b3);		
				
				\draw[line, red] (r1) to (cr1);
				\draw[line, green] (r2) to (cr2);
				\draw[line, blue] (r3) to (cr3);

				\draw[dline, green] (b2) to (-5,6.5);
				\draw[dline, red] (b1) to  (0,-3);
				
				\draw[line, red] (cr1) to (0,11.5);
				\draw[line, green] (cr2) to  (7,-0.8);
				\draw[line, blue] (cr3) to  (-7,-0.8);
				\draw[dline, blue] (b3) to (5,6.5);

				\begin{pgfonlayer}{background}
					\draw[nline] (d1.center) to [in=-150, out=50] (c8.center) to [in=-160, out=30] (d5.center) to (cm.center) to (d2.center) to (c4.center) -- cycle;
				\end{pgfonlayer}
				
			\end{tikzpicture}
			\caption{The completion of $G$ obtained by superimposing $G^\sigma$ and its suspended dual $G^{\sigma^*}$ (the latter depicted with dotted edges). The primal Schnyder wood is not the minimal element of the lattice of Schnyder woods of $G$, as this completion contains a clockwise directed cycle (marked in yellow).}
			\label{fig:completion}
		\end{figure}
		
		Consider the superposition of $G^\sigma$ and its suspended dual $G^{\sigma^*}$ such that exactly the primal dual pairs of edges cross (here, for every $1 \leq i \leq 3$, the half-edge at $r_i$ crosses the dual edge $b_{i-1}b_{i+1}$).
		
		\begin{definition}\label{def:schnyderdual}
			For any Schnyder wood $S$ of $G^\sigma$, define the orientation and coloring $S^*$ of the suspended dual $G^{\sigma^*}$ as follows (see Figure~\ref{fig:completion}):
			\begin{enumerate}
				\item For every unidirected $(i-1)$-colored edge or half-edge $e$ of $G^\sigma$, color $e^*$ with the two colors $i$ and $i+1$ such that $e$ points to the right of the $i$-colored direction.
				\item Vice versa, for every $i$-$(i+1)$-colored edge $e$ of $G^\sigma$, $(i-1)$-color $e^*$ unidirected such that $e^*$ points to the right of the $i$-colored direction.
				\item For every color $i$, make the half-edge at $b_i$ unidirected, outgoing and $i$-colored.\label{def:schnyderdual3}
			\end{enumerate}
		\end{definition}
		
		The following lemma states that $S^*$ is indeed a Schnyder wood of the suspended dual. By Definition~\ref{def:schnyderdual}\ref{def:schnyderdual3}, the vertices $b_1$, $b_2$ and $b_3$ are the roots of $S^*$.
		
		\begin{lemma}[{\cite{Kant1992}\cite[Prop.~3]{Felsner2004}}]\label{lem:schnyderdual}
			For every Schnyder wood $S$ of $G^\sigma$, $S^*$ is a Schnyder wood of $G^{\sigma^*}$.
		\end{lemma}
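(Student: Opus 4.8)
The plan is to verify that the orientation and coloring $S^*$ satisfies all four conditions of Definition~\ref{def:Schnyderwood} for the suspended dual $G^{\sigma^*}$, whose roots are $b_1,b_2,b_3$. Two of the conditions are immediate. Condition~\ref{def:Schnyderwood1} holds edge by edge: Definition~\ref{def:schnyderdual} turns the dual of each unidirected primal edge into a bidirected edge carrying the two distinct colors $i$ and $i+1$, and the dual of each bidirected primal edge into a unidirected edge of the single color $i-1$; since the primal--dual crossing is a bijection between the edges of $G^\sigma$ and those of $G^{\sigma^*}$, every dual edge receives a well-defined admissible coloring. Condition~\ref{def:Schnyderwood2} is exactly Definition~\ref{def:schnyderdual}\ref{def:schnyderdual3}. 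Hence the real content lies in the local rotation rule Condition~\ref{def:Schnyderwood3} at every dual vertex and the facial acyclicity Condition~\ref{def:Schnyderwood4} of every inner dual face.

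For Condition~\ref{def:Schnyderwood4} I would argue locally. An inner face of $G^{\sigma^*}$ corresponds to an inner vertex $v$ of $G^\sigma$ and is bounded by the dual edges $e^*$ of the primal edges $e$ incident to $v$. By the primal Condition~\ref{def:Schnyderwood3}, $v$ has exactly one outgoing edge of color $i$; tracing this single edge through Definition~\ref{def:schnyderdual} shows that its dual does not carry color $i$ at all (a unidirected $i$-colored primal edge dualizes to the colors $i+1$ and $i-1$, and a bidirected edge containing color $i$ dualizes to the remaining third color). Consequently not every edge on the boundary of the dual face around $v$ is $i$-colored, so this boundary---a simple cycle, as $G$ is $\sigma$-internally 3-connected---cannot contain a monochromatic directed cycle of color $i$. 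Repeating the argument for each color establishes Condition~\ref{def:Schnyderwood4}.

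The main obstacle is Condition~\ref{def:Schnyderwood3}. A dual vertex is an inner face $f$ of the completed primal graph, and its incident dual edges are the duals of the primal edges on $\partial f$. I would traverse $\partial f$ clockwise and read off, via Definition~\ref{def:schnyderdual} and the ``$e$ points to the right of the $i$-colored direction'' rule, the color and orientation that each crossing dual edge contributes at $f^*$. The key structural sub-claim is that, going clockwise around $f$, the incident edges split into exactly three nonempty consecutive arcs $A_1,A_2,A_3$, where every dual edge of an edge in $A_i$ carries color $i$ coherently oriented, and the three transitions between consecutive arcs are precisely the three outgoing dual edges of $f^*$, one of each color in clockwise order \red,\green,\blue, with the incoming dual edges of color $i$ confined to the correct sector. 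This \emph{three-arc} (equivalently, three-corner) property of faces is the dual counterpart of the vertex rule; it is proved by a local case analysis at each corner of $f$, where the primal Condition~\ref{def:Schnyderwood3} at the common vertex restricts the admissible color/orientation combinations, together with the primal facial acyclicity Condition~\ref{def:Schnyderwood4}, which forbids the boundary from remaining in a single color all the way around. I expect this case analysis---matching the right-hand rule of Definition~\ref{def:schnyderdual} against the admissible corner configurations---to be the technically delicate step, since it is where planarity and the cyclic-order hypotheses genuinely interact. Once the three-arc property is established, Condition~\ref{def:Schnyderwood3} for $S^*$ follows, and all four conditions together show that $S^*$ is a Schnyder wood of $G^{\sigma^*}$.
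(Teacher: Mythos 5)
The paper does not actually prove this lemma: it is quoted from the literature (Kant, and Felsner~[Prop.~3]), so there is no in-paper argument to compare against, and your proposal has to stand on its own as a complete proof. Your handling of Conditions~\ref{def:Schnyderwood1}, \ref{def:Schnyderwood2} and~\ref{def:Schnyderwood4} is sound: in particular, the observation that the dual of the outgoing $i$-colored edge at a primal vertex $v$ never carries color $i$ (whether that edge is unidirected $i$-colored or bidirected with $i$ as one of its two colors) does rule out an $i$-colored directed cycle on the boundary of the dual face corresponding to $v$. One small correction: the inner faces of $G^{\sigma^*}$ correspond to \emph{all} vertices of $G^\sigma$, roots included (the root $r_i$ lies inside the triangle $b_1b_2b_3$, and its face is bounded among others by the dual edge $b_{i-1}b_{i+1}$), not only to inner vertices; the argument goes through verbatim for roots, since they too have exactly one outgoing edge of each color.

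The genuine gap is Condition~\ref{def:Schnyderwood3}. You correctly identify the ``three-arc'' property of a primal face $f$ --- that the duals of its boundary edges split into three nonempty consecutive coherently oriented arcs, with exactly three outgoing dual edges at $f^*$, one per color, in clockwise order --- as the heart of the matter, but you do not prove it: the proposal explicitly defers the corner-by-corner case analysis (``I expect this case analysis \dots\ to be the technically delicate step''). That case analysis \emph{is} the lemma; everything else is bookkeeping. As written, there is no argument showing why each color class of dual edges around $f^*$ is nonempty and consecutive, why exactly one edge per color is outgoing at $f^*$, or why the outgoing edges appear in clockwise order $\red,\green,\blue$ rather than the reverse --- note that at the crossing vertices of the completion the incoming colors appear in \emph{counterclockwise} order (Corollary~\ref{cor:crossingvertex}), so the orientation-reversal bookkeeping is exactly where a sign error could hide. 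To close the gap you would have to carry out the corner analysis in full (or simply cite Kant and Felsner, which is what the paper itself does).
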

		
		Since $S^{*^*} = S$, Lemma~\ref{lem:schnyderdual} gives a bijection between the Schnyder woods of $G^\sigma$ and the ones of $G^{\sigma^*}$.
		Let the \emph{completion} $\widetilde{G}$ of $G$ be the plane graph obtained from the superposition of $G^\sigma$ and $G^{\sigma^*}$ by subdividing each pair of crossing (half-)edges with a new vertex, which we call a \emph{crossing vertex} (see Figure~\ref{fig:completion}). The completion has six half-edges pointing into its outer face.
		
		Any Schyder wood $S$ of $G^\sigma$ implies the following natural orientation and coloring $\widetilde{G}_S$ of its completion $\widetilde{G}$. Let $vw \in E(G^\sigma) \cup E(G^{\sigma^*})$, let $z$ be the crossing vertex of $G^\sigma$ that subdivides $vw$ and consider the coloring of $vw$ in either $S$ or $S^*$. If $vw$ is outgoing of $v$ and $i$-colored, we direct $vz \in E(\widetilde{G})$ toward $z$ and $i$-color it (and do the same for all other vertices than $v$). In the remaining case that $vw$ is unidirected, incoming at $v$ and $i$-colored, we direct $zv \in E(\widetilde{G})$ toward $v$ and $i$-color it. The three half-edges of $G^{\sigma^*}$ inherit the orientation and coloring of $S^*$ for $\widetilde{G}_S$.
		By Definition~\ref{def:schnyderdual}, the construction of $\widetilde{G}_S$ implies immediately the following corollary.
		
		\begin{corollary}\label{cor:crossingvertex}
			Every crossing vertex of $\widetilde{G}_S$ has one outgoing edge and three incoming edges and the latter are colored $\red$, $\green$ and $\blue$ in counterclockwise direction.
		\end{corollary}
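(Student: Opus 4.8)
The plan is to unwind the definition of $\widetilde{G}_S$ at a single crossing vertex $z$. By construction $z$ subdivides a primal (half-)edge $e \in E(G^\sigma)$ together with the dual (half-)edge $e^*$ that crosses it, and because $e$ and $e^*$ cross, the four half-edges incident to $z$ alternate around $z$ between the two halves of $e$ and the two halves of $e^*$. I would first read off, from the orientation and coloring rules defining $\widetilde{G}_S$, how each direction of $e$ and $e^*$ contributes at $z$: an outgoing $i$-colored direction (of either $S$ or $S^*$) produces a half-edge \emph{incoming} at $z$ and $i$-colored, whereas the unique unidirected-and-incoming direction produces the one half-edge \emph{outgoing} of $z$. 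In particular a unidirected edge contributes one incoming and one outgoing half at $z$ in its single color, while a bidirected edge contributes two incoming halves in its two colors.

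The core is then a two-case analysis following Definition~\ref{def:schnyderdual}, noting that in each crossing pair exactly one of $e, e^*$ is unidirected and the other bidirected. In the first case $e$ is unidirected $(i-1)$-colored and $e^*$ is $i$-$(i+1)$-colored; in the second case these roles are exchanged. Either way the bookkeeping gives exactly one outgoing half-edge and three incoming half-edges whose colors are precisely $\red$, $\green$ and $\blue$, which already yields the degree and color parts of the statement. The crossing vertices coming from the root half-edges of $G^\sigma$ fall under the same analysis, since Definition~\ref{def:schnyderdual} treats half-edges exactly like edges.

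What remains, and what I expect to be the only genuine obstacle, is the cyclic order of the three incoming edges; the count itself is routine. Here I would argue geometrically: place $z$ at the origin and orient the plane so that the $i$-colored direction of the bidirected partner points north. The convention in Definition~\ref{def:schnyderdual} that $e$ ``points to the right of the $i$-colored direction'' then fixes on which side of $z$ each of the four half-edges lies (e.g.\ the head of the $(i-1)$-colored primal edge lies east and its tail west, while the $i$- and $(i+1)$-colored dual halves lie south and north). Reading the three incoming halves in counterclockwise order around $z$ gives their colors in the cyclic order $(i-1), i, (i+1)$, that is $\red, \green, \blue$, independently of $i$; swapping the roles of $e$ and $e^*$ settles the other case. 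The one point demanding care is to keep the right-hand convention of Definition~\ref{def:schnyderdual} consistent with the counterclockwise reading, which is precisely the crossing-vertex counterpart of the clockwise condition on outgoing edges at ordinary vertices in Definition~\ref{def:Schnyderwood}\ref{def:Schnyderwood3}.
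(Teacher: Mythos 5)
Your proposal is correct and takes essentially the same route as the paper: the paper offers no separate argument, stating that the corollary follows immediately from Definition~\ref{def:schnyderdual} and the construction of $\widetilde{G}_S$, and your bookkeeping (unidirected partner gives one incoming and the unique outgoing half, bidirected partner gives two incoming halves, with the ``points to the right'' convention fixing the counterclockwise order $\red,\green,\blue$) is precisely that immediate verification written out in detail.
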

		
		Using results on orientations with prescribed outdegrees on the respective completions, Felsner and Mendez~\cite{Mendez1994,Felsner2004a} showed that the set of Schnyder woods of a planar suspension $G^\sigma$ forms a distributive lattice. The order relation of this lattice relates a Schnyder wood of $G^\sigma$ to a second Schnyder wood if the former can be obtained from the latter by reversing the orientation of a directed clockwise cycle in the completion. This gives the following lemma, of which the computational part is due to Fusy~\cite{Fusy2007}.
		
		\begin{lemma}[\cite{Mendez1994,Felsner2004a}\cite{Fusy2007}]\label{lem:latticeminimal}
			For the minimal element $S$ of the lattice of all Schnyder woods of $G^\sigma$, $\widetilde{G}_S$ contains no clockwise directed cycle. Also, $S$ and $\widetilde{G}_S$ can be computed in linear time.
		\end{lemma}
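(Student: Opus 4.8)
The plan is to identify Schnyder woods of $G^\sigma$ with \emph{$\alpha$-orientations} of the completion and then apply the general lattice theory for such orientations. First I would fix the outdegree function $\alpha$ on the (finite part of the) completion $\widetilde{G}$ that assigns outdegree $3$ to every primal vertex (its three colored outgoing edges, the $i$-colored one being the half-edge at a root $r_i$), outdegree $3$ to every dual vertex, and outdegree $1$ to every crossing vertex. The construction of $\widetilde{G}_S$ in Definition~\ref{def:schnyderdual}, together with Corollary~\ref{cor:crossingvertex}, shows that every Schnyder wood $S$ yields an orientation $\widetilde{G}_S$ realizing exactly these outdegrees; moreover the coloring can be recovered from the orientation alone, since at each crossing vertex the three incoming edges are forced to be colored $\red,\green,\blue$ counterclockwise, and this propagates a unique consistent coloring to the incident primal and dual edges. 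The converse direction --- that every orientation of $\widetilde{G}$ with outdegrees $\alpha$ arises from a unique Schnyder wood --- is the key structural fact of Ossona de Mendez and Felsner~\cite{Mendez1994,Felsner2004a}: they prove that on the completion these prescribed outdegrees are in bijection with Schnyder woods, the remaining conditions \ref{def:Schnyderwood1}--\ref{def:Schnyderwood4} of Definition~\ref{def:Schnyderwood} being forced by planarity together with an Euler-formula count of the edges, vertices and faces.

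Second, I would invoke the distributive lattice theorem for $\alpha$-orientations: for any fixed outdegree function on a plane graph, the set of all orientations realizing it forms a distributive lattice in which one orientation covers another precisely when it is obtained by reversing a single directed clockwise cycle with respect to the embedding. Reversing a clockwise cycle strictly decreases the element in the lattice order, so an orientation is the unique minimal element if and only if it admits no clockwise directed cycle at all. Transported through the bijection of the previous step, this shows that the minimal Schnyder wood $S$ of $G^\sigma$ is exactly the one whose completion $\widetilde{G}_S$ contains no clockwise directed cycle, which is the first assertion of the lemma.

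For the computational statement, I would proceed in two stages. One first computes some Schnyder wood of $G^\sigma$ in linear time --- for instance via Schnyder's shelling order or a canonical ordering --- and then descends to the minimal element. The difficulty here is efficiency: the lattice may be exponentially large, so naively reversing clockwise cycles one at a time need not terminate in linear time. This is precisely where Fusy's contribution~\cite{Fusy2007} enters: rather than iterating flips, his algorithm constructs the minimal element directly by a single linear-time sweep that simultaneously builds $S$ and its completion $\widetilde{G}_S$. I expect this last efficiency argument to be the main obstacle, whereas the lattice structure and the characterization of the minimum are essentially immediate once the $\alpha$-orientation correspondence is set up.
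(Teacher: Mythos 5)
This lemma is cited in the paper from \cite{Mendez1994,Felsner2004a,Fusy2007} without any internal proof, so the only comparison is against those references; your reconstruction follows exactly their route (Schnyder woods as $\alpha$-orientations of the completion with outdegrees $3$, $3$, $1$ at primal, dual and crossing vertices, the distributive lattice of $\alpha$-orientations whose cover relation is clockwise-cycle reversal, hence the minimal element is the unique orientation with no clockwise directed cycle, plus Fusy's direct linear-time construction rather than iterated flips). Your argument is correct and is essentially the standard one underlying the cited lemma.
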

		
		We call the minimal element of the lattice of all Schnyder woods of $G^\sigma$ the \emph{minimal} Schnyder wood of $G^\sigma$. 
		
		\subsection{Ordered path partitions.}
		\begin{definition}\label{def:opp}
			For any $j \in \{1,2,3\}$ and any $\{r_1, r_2, r_3\}$-internally 3-connected plane graph $G$, an \emph{ordered path partition} $\mathcal{P} = (P_0, \ldots, P_s)$ of $G$ \emph{with base-pair} $(r_j,r_{j+1})$ is an ordered partition of $V(G)$ into the vertex sets of induced paths (therefore often referred to as paths) such that the following holds for every $i \in \{0, \ldots, s-1\}$, where $V_i := \bigcup_{q=0}^i V(P_q)$ and the \emph{contour} $C_i$ is the clockwise walk from $r_{j+1}$ to $r_j$ on the outer face of $G[V_i]$. 
			\begin{enumerate}
				\item $P_0$ consists of the vertices of the clockwise path from $r_j$ to $r_{j+1}$ on the outer face boundary, and $P_s = \{r_{j+2}\}$.\label{def:opp_init}
				\item Each vertex in $P_i$ has a neighbor in $V(G) \setminus V_i$.\label{def:opp_at_least_one_neighbor}
				\item $C_i$ is a path.\label{def:opp_2connected}
				\item Each vertex in $C_i$ has at most one neighbor in $P_{i+1}$.\label{def:opp_at_most_one_neighbor}
			\end{enumerate}
		\end{definition}
		
		By Definition~\ref{def:opp}\ref{def:opp_init} and~\ref{def:opp}\ref{def:opp_at_least_one_neighbor}, $G$ contains for every $i$ and every vertex $v \in P_i$ a path from $v$ to $r_{j+2}$ that intersects $V_i$ only in $v$. Since $G$ is plane, we conclude the following.
		
		\begin{lemma}\label{lem:planeconstruction}
			Every path $P_i$ of an ordered path partition is embedded into the outer face of $G[V_{i-1}]$ for every $1 \leq i \leq s$.
		\end{lemma}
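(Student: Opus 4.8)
The plan is to argue topologically, working throughout with the fixed plane embedding of $G$ and the sub-embeddings it induces on subgraphs. Fix $i$ with $1 \le i \le s$ and a vertex $v \in P_i$. The starting point is exactly the auxiliary path furnished in the paragraph preceding the lemma: by Definition~\ref{def:opp}\ref{def:opp_init} and~\ref{def:opp}\ref{def:opp_at_least_one_neighbor} there is a path $Q_v$ in $G$ from $v$ to $r_{j+2}$ that meets $V_i$ only in $v$. Since $V_{i-1} \subseteq V_i$ while $v \in V(P_i)$ is disjoint from $V_{i-1}$, the path $Q_v$ avoids every vertex of $V_{i-1}$; in particular $r_{j+2} \notin V_{i-1}$.

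The key step is then to read off planarity. Because $G$ is a plane graph, two of its edges meet only in common endpoints; as every vertex and every edge of $Q_v$ lies outside $V_{i-1}$, no point of the drawing of $Q_v$ can lie on the drawing of $G[V_{i-1}]$. Thus $Q_v$ is a curve in the plane disjoint from $G[V_{i-1}]$ joining $v$ to $r_{j+2}$, which forces $v$ and $r_{j+2}$ to lie in the same face of $G[V_{i-1}]$. I would then identify this face as the outer one: the root $r_{j+2}$ lies on the outer-face boundary of $G$ and, being absent from $V_{i-1}$, sits in the open complement of $G[V_{i-1}]$; since deleting vertices and edges can only merge faces and never splits off the unbounded region, the outer face of $G$ is contained in the outer face of $G[V_{i-1}]$, and points arbitrarily close to $r_{j+2}$ belong to the former. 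Hence $r_{j+2}$, and therefore $v$, lies in the outer face of $G[V_{i-1}]$.

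Finally I would upgrade the conclusion from vertices to the whole path $P_i$. For an edge $vw$ of the induced path $P_i$, both endpoints avoid $V_{i-1}$, so by the same no-crossing argument the open edge $vw$ is disjoint from $G[V_{i-1}]$ and hence lies in a single face; as its two endpoints already lie in the outer face, so does the edge. Running this over all vertices and edges of $P_i$ shows that $P_i$ is embedded in the outer face of $G[V_{i-1}]$, as claimed.

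I expect the main obstacle to be making the two topological claims rigorous rather than merely plausible: that a connected subgraph of $G$ avoiding $V_{i-1}$ is confined to a single face of $G[V_{i-1}]$, and that this face is the unbounded one. Both rest squarely on the defining property of a plane graph that distinct edges intersect only at shared endpoints, so a curve running through vertices and edges all disjoint from $V_{i-1}$ never meets $G[V_{i-1}]$; the remaining content is the standard monotonicity observation that passing to a subgraph can only enlarge the outer face. Care is needed to phrase these as statements about the actual point sets of the embedding rather than about abstract graphs.
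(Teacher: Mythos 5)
Your proof is correct and takes essentially the same route as the paper: the paper derives the lemma in a single sentence from exactly your two ingredients, namely the escape path from each $v \in P_i$ to $r_{j+2}$ that meets $V_i$ only in $v$ (Definition~\ref{def:opp}\ref{def:opp_init} and~\ref{def:opp}\ref{def:opp_at_least_one_neighbor}) together with planarity. Your write-up just makes explicit the topological details the paper leaves implicit --- that a connected curve avoiding $G[V_{i-1}]$ stays in one face, and that this face is the unbounded one because $r_{j+2}$ borders the outer face of $G$.
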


		\subsubsection{Compatible Ordered Path Partitions.}
		We describe a connection between Schnyder woods and ordered path partitions that was first given by Badent et al.~\cite[Theorem~5]{Badent2011}. Because a part of its proof was incomplete, the result was then corrected by Alam et al.~\cite[Lemma~1]{Alam2015}, which however outsourced their proof into the extended abstract~\cite[arXiv version, Section~2.2]{Alam2015}.
		
		\begin{definition}
			Let $j \in \{1,2,3\}$ and $S$ be any Schnyder wood of the suspension $G^\sigma$ of $G$. As proven in~\cite[arXiv version, Section~2.2]{Alam2015}, the vertex sets of the inclusion-wise maximal $j$-$(j+1)$-colored paths of $S$ then form an ordered path partition of $G$ with base pair $(r_j,r_{j+1})$ and an order that is a specific linear extension of the partial order given by reachability in the acyclic graph $T^{-1}_j \cup T^{-1}_{j+1} \cup T_{j+2}$; we call this special ordered path partition \emph{compatible} with $S$ and denote it by $\mathcal{P}^{j,j+1}$. 
			\label{def:compatible_opp}
		\end{definition}
		
		For example, for the Schnyder wood given in Figure~\ref{fig:completion}, $\mathcal{P}^{\green,\blue}$ consists of the vertex sets of six maximal \green-\blue-colored paths, of which four are single vertices.
		
		We only need the fact that the order of $\mathcal{P}^{j,j+1}$ is a linear extension of the partial order given by reachability in the acyclic graph $T^{-1}_j \cup T^{-1}_{j+1} \cup T_{j+2}$. We refer the interested reader for further details to \cite{Alam2015,Badent2011}.

		We denote each path $P_i \in \mathcal{P}^{j,j+1}$ by $P_i := \{v^i_1, \dots, v^i_k\}$ such that $v^i_1v^i_2$ is outgoing $j$-colored at $v^i_1$ and, for every $l \in \{1, \ldots, k-1\}$, $v^i_lv^i_{l+1}$ is a $j$-$(j+1)$-colored edge.
		
		Let $C_i$ be as in Definition~\ref{def:opp}. By Definition~\ref{def:opp}\ref{def:opp_2connected} and Lemma~\ref{lem:planeconstruction}, every path $P_i = \{v^i_1, \dots, v^i_k\}$ of an ordered path partition satisfying $i \in \{1, \ldots, s\}$ has a neighbor $v^i_0 \in C_{i-1}$ that is closest to $r_{j+1}$ and a different neighbor $v^i_{k+1} \in C_{i-1}$ that is closest to $r_j$ (see Figure~\ref{fig_cw_cycle_all_edges_right}). We call $v^i_0$ the \emph{left neighbor} of $P_i$, $v^i_{k+1}$ the \emph{right neighbor} of $P_i$ and $P_i^e := \{v^i_0\} \cup P_i \cup \{v^i_{k+1}\}$ the \emph{extension} of $P_i$; we omit superscripts if these are clear from the context. For $0 < i \leq s$, let the path $P_i$ \emph{cover} an edge $e$ or a vertex $x$ if $e$ or $x$ is contained in $C_{i-1}$, but not in $C_i$, respectively.
		
		\begin{lemma}\label{lem:Pi}
			Every path $P_i \neq P_0$ of a compatible ordered path partition $\mathcal{P}^{j,j+1}$ satisfies the following (see Figure~\ref{fig_cw_cycle_all_edges_right}):
			\begin{enumerate}
				\item Every neighbor of $P_i$ that is in $V_{i-1}$ is contained in the path of $C_{i-1}$ between $v^i_0$ and $v^i_{k+1}$.\label{lem:PiA}
				\item $v^i_0v^i_1$ and $v^i_kv^i_{k+1}$ are edges of $G[V_i]$.\label{lem:PiB}
				\item $v^i_0v^i_1$ is $(j+1)$-colored outgoing at $v^i_1$ and $v^i_kv^i_{k+1}$ is $j$-colored outgoing at $v^i_k$.\label{lem:PiC}
				\item Every edge $v^i_lx$ incident to $P_i$ and $V_{i-1}$ except for $v^i_0v^i_1$ and $v^i_kv^i_{k+1}$ is unidirected, directed towards $P_i$ and $(j+2)$-colored and satisfies $x \notin \{v^i_0, v^i_{k+1}\}$.
				\label{lem:PiD}
			\end{enumerate}
		\end{lemma}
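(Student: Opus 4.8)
The plan is to prove all four items from three ingredients available above: the planar embedding of $P_i$ in the outer face of $G[V_{i-1}]$ (Lemma~\ref{lem:planeconstruction}), the local rotation-and-coloring rule of a Schnyder wood at each vertex (Definition~\ref{def:Schnyderwood}\ref{def:Schnyderwood3}), and the fact that the order of $\mathcal{P}^{j,j+1}$ is a linear extension of reachability in $H := T^{-1}_j \cup T^{-1}_{j+1} \cup T_{j+2}$ (Definition~\ref{def:compatible_opp}). For part (a) I would argue topologically: by Lemma~\ref{lem:planeconstruction} the path $P_i$ lies in the outer face of $G[V_{i-1}]$, whose relevant boundary is the contour path $C_{i-1}$, so every neighbor of $P_i$ in $V_{i-1}$ lies on $C_{i-1}$. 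Since $v^i_0$ and $v^i_{k+1}$ are by definition the neighbors on $C_{i-1}$ closest to $r_{j+1}$ and to $r_j$, every such neighbor lies on the subpath of $C_{i-1}$ between them.

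The core step is to classify, via the linear extension, the color and orientation at $v^i_l$ of every edge joining $P_i$ to $V_{i-1}$. First I record that $P_i$ is oriented consistently, i.e. $v^i_lv^i_{l+1}$ is outgoing $j$-colored at $v^i_l$ and outgoing $(j+1)$-colored at $v^i_{l+1}$; this follows by a short induction from the naming convention, since at $v^i_{l+1}$ the outgoing $(j+1)$-slot is already occupied by $v^i_{l+1}v^i_l$, forcing the next path edge to be its outgoing $j$-edge. Now reversing $T_j$ and $T_{j+1}$ in $H$ shows that the $H$-endpoint of an outgoing $j$-edge, of an outgoing $(j+1)$-edge, and of an incoming $(j+2)$-edge of $v^i_l$ precedes $v^i_l$ in the order, whereas the endpoint of an outgoing $(j+2)$-edge and of an incoming $j$- or $(j+1)$-edge is weakly later. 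As each vertex has exactly one outgoing edge of every color, and the outgoing $j$- and $(j+1)$-edges of $v^i_l$ are the path edges $v^i_lv^i_{l+1}$ and $v^i_lv^i_{l-1}$ except at the two ends, the only edges from $P_i$ to $V_{i-1}$ are the outgoing $(j+1)$-edge of $v^i_1$, the outgoing $j$-edge of $v^i_k$, and incoming $(j+2)$-edges. This already gives the coloring assertion of (d), and, calling the endpoints of the two distinguished outgoing edges $y$ and $z$, places $y,z \in V_{i-1}$.

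To complete (b) and (c) I would identify $y = v^i_0$ and $z = v^i_{k+1}$. Both lie on $C_{i-1}$ by (a); using the cyclic order at $v^i_1$ from Definition~\ref{def:Schnyderwood}\ref{def:Schnyderwood3}, the outgoing $(j+1)$-edge is counterclockwise-most among the edges leaving $v^i_1$ toward $V_{i-1}$, so planarity together with the fact that $v^i_1$ is the end of $P_i$ facing $r_{j+1}$ forces $y$ to be the neighbor of $P_i$ closest to $r_{j+1}$, that is $y = v^i_0$; the argument for $z = v^i_{k+1}$ is symmetric. Then (c) holds, and (b) is immediate since $v^i_0, v^i_{k+1} \in V_{i-1} \subseteq V_i$.

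For the two remaining parts of (d): an incoming $(j+2)$-edge of $v^i_l$ cannot be bidirected, for a bidirected edge incoming $(j+2)$-colored at $v^i_l$ would be $(j+1)$-$(j+2)$- or $(j+2)$-$j$-colored and hence also the outgoing $(j+1)$- or outgoing $j$-edge of $v^i_l$, which are path edges (or the two distinguished edges), a contradiction; and $x \notin \{v^i_0, v^i_{k+1}\}$ follows from Definition~\ref{def:opp}\ref{def:opp_at_most_one_neighbor}, since $v^i_0$ and $v^i_{k+1}$ already spend their unique neighbor in $P_i$ on $v^i_1$ and $v^i_k$. I expect the identification $y = v^i_0$, $z = v^i_{k+1}$ to be the main obstacle, as it is the only step that genuinely couples the combinatorial Schnyder rotation scheme to the planar left-to-right order of neighbors along $C_{i-1}$; everything else is bookkeeping with the linear-extension order and the local coloring rule.
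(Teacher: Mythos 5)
Your setup matches the paper's: the same characterization, via the linear extension in $T^{-1}_j \cup T^{-1}_{j+1} \cup T_{j+2}$, of which colors/orientations can join $P_i$ to $V_{i-1}$ versus $V\setminus V_i$, and the same plan of combining the Schnyder sector condition with planarity. But the step you yourself flag as the main obstacle --- identifying $y = v^i_0$ and $z = v^i_{k+1}$ --- is where your argument has a genuine gap: you invoke ``the fact that $v^i_1$ is the end of $P_i$ facing $r_{j+1}$'', and this is not a fact but precisely what must be proven. The indexing $v^i_1, \ldots, v^i_k$ is fixed by the coloring convention ($v^i_1v^i_2$ is outgoing $j$-colored at $v^i_1$), not by the embedding, so a priori the path could lie in the outer face of $G[V_{i-1}]$ with $v^i_1$ on the $r_j$ side; appealing to planarity here is circular until that configuration is excluded. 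The paper closes exactly this hole in two steps. First, writing $w$ and $u$ for the heads of the outgoing $(j+1)$-edge at $v^i_1$ and the outgoing $j$-edge at $v^i_k$, it assumes $u$ is closer to $r_{j+1}$ on $C_{i-1}$ than $w$ and derives a contradiction: by Definition~\ref{def:Schnyderwood}\ref{def:Schnyderwood3} the outgoing $(j+2)$-colored edge of each vertex of $P_i$ lies in the counterclockwise sector from its outgoing $j$- to its outgoing $(j+1)$-edge, which in that configuration points into the bounded region enclosed by $G[P_i] \cup \{v^i_ku, v^i_1w\}$ and the subpath of $C_{i-1}$ from $u$ to $w$; planarity then forces that outgoing $(j+2)$-edge to end in $V_{i-1}$, contradicting the characterization (outgoing $(j+2)$-edges must go to $V \setminus V_i$). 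Second, the degenerate case $u = w$ --- which you also do not address, and which must be excluded before one can even speak of two distinct extreme neighbors --- is ruled out by Lemma~\ref{lem_felsner_no_cycle}, since $P_i \cup u$ would form an oriented cycle in $T_j \cup T^{-1}_{j+1} \cup T^{-1}_{j+2}$. Only after both steps does the rotation argument (incoming $(j+2)$-edges lie in the clockwise sector from the outgoing $j$- to the outgoing $(j+1)$-edge, hence land on $C_{i-1}$ between $w$ and $u$) pin down $w = v^i_0$ and $u = v^i_{k+1}$.

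One smaller remark: your derivation of $x \notin \{v^i_0, v^i_{k+1}\}$ in part (d) from Definition~\ref{def:opp}\ref{def:opp_at_most_one_neighbor} (each vertex of $C_{i-1}$ has at most one neighbor in $P_i$, and $v^i_0$, $v^i_{k+1}$ already spend theirs on $v^i_1$, $v^i_k$) is correct and in fact simpler than the paper's argument, which instead constructs an oriented cycle through $v^i_0$ and invokes Lemma~\ref{lem_felsner_no_cycle} a second time. So that part stands on its own (given (b) and (c)); what is missing is the proof of the orientation of $P_i$ relative to $C_{i-1}$, which is the actual content of (b) and (c).
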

		\begin{proof}
			The statement~\ref{lem:PiA} follows directly from Lemma~\ref{lem:planeconstruction} and the definition of left and right neighbor of $P_i$. 
			
			Now, we prove statements~\ref{lem:PiB} and \ref{lem:PiC}. 
			According to Definition~\ref{def:compatible_opp}, the order of $\mathcal{P}^{j,j+1}$ on the vertex sets of paths is a linear extension of the partial order given by reachability in the acyclic graph $T^{-1}_j \cup T^{-1}_{j+1} \cup T_{j+2}$. Hence, we are able to characterize the edges that join $P_i$ with vertices of $V_{i-1}$ and $V-V_i$, respectively. Edges that join $P_i$ with vertices of $V_{i-1}$ are incoming $(j+2)$-colored, unidirected outgoing $j$-colored or unidirected outgoing $(j+1)$-colored at a vertex of $P_i$. Edges that join $P_i$ with vertices of $V - V_i$ are outgoing $(j+2)$-colored, unidirected incoming $j$-colored or unidirected incoming $(j+1)$-colored at a vertex of $P_i$. The remaining edges are the $j$-$(j+1)$-colored edges of $P_i$.  
			
			Let $v^i_ku$ be the outgoing $j$-colored edge at $v^i_k$ and $v^i_1w$ be the outgoing $(j+1)$-colored edge at $v^i_1$. Observe that for $k > 1$, $v^i_1v^i_2$ is outgoing $j$-colored by definition. Thus, as $G[P_i]$ is induced, $w \notin P_i$. If $k=1$, $P_i$ consists of only one vertex and hence $w \notin P_i$. 
			Thus, as $G[P_i]$ is a maximal $j$-$(j+1)$-colored path, $v^i_1w$ is either unidirected $(j+1)$-colored or $(j+1)$-$(j+2)$-colored. As observed above, this implies that $w \in V_{i-1}$. And statement \ref{lem:PiA} yields that $w \in C_{i-1}$. Similarly, we obtain that $u \in C_{i-1}$.
			
			Assume, for the sake of contradiction, that $u$ is closer to $r_{j+1}$ on $C_{i-1}$ than $w$. By definition of $P_i$, for every vertex of $P_i$ the outgoing $j$-colored edge points towards $u$ and the outgoing $(j+1)$-colored edge points towards $w$ on $G[P_i] \cup \{v^i_ku, v^i_1w\}$. By Definition~\ref{def:Schnyderwood}\ref{def:Schnyderwood3}, the outgoing $(j+2)$-colored edge $e$ of a vertex of $P_i$ occurs in the counterclockwise sector from the outgoing $j$-colored to the outgoing $(j+1)$-colored edge excluding both. As $u$ is closer to $r_{j+1}$ on $C_{i-1}$ than $w$, this sector is in the interior of the region bounded by $G[P_i] \cup \{v^i_ku, v^i_1w\}$ and the path from $u$ to $w$ on $C_{i-1}$. Hence, by planarity, $e$ joins $P_i$ with a vertex of $C_{i-1} \subseteq V_{i-1}$, contradicting our above characterization of edges that join $P_i$ with vertices of $V_{i-1}$. Thus, $w$ is closer to $r_{j+1}$ on $C_{i-1}$ than $u$ or $w =u$. If $u = w$, then Lemma~\ref{lem_felsner_no_cycle} is violated by the cycle formed by $P_i \cup u$ in $T_j \cup T_{j+1}^{-1} \cup T_{j+2}^{-1}$, a contradiction. Thus, $w$ is closer to $r_{j+1}$ on $C_{i-1}$ than $u$.
			
			Since $P_i$ is a maximal $j$-$(j+1)$-colored path, for every vertex in $P_i$ the outgoing $j$-colored and the outgoing $(j+1)$-colored edge are either $v^i_ku$, $v^i_1w$ or in $P_i$. Hence, by our above characterization, the edges that join $P_i$ with vertices of $C_{i-1} \subseteq V_{i-1}$ are exactly $v^i_ku$, $v^i_1w$ and the unidirected incoming $(j+2)$-colored edges at vertices of $P_i$. Let $vx$ be such an unidirected incoming $(j+2)$-colored edge with $v \in P_i$. By Definition~\ref{def:Schnyderwood}\ref{def:Schnyderwood3}, $vx$ occurs in the clockwise sector from the outgoing $j$-colored edge to the outgoing $(j+1)$-colored edge around $v$ excluding both. And hence, by planarity and the fact that $w$ is closer to $r_{j+1}$ on $C_{i-1}$ than $u$, $x$ is contained in the path of $C_{i-1}$ from $w$ to $u$. Thus, by the definition of the left and right neighbor $v^i_0$ and $v^i_{k+1}$ of $P_i$ we have $v^i_0 = w$ and $v^i_{k+1} = u$, respectively. The statements~\ref{lem:PiB} and \ref{lem:PiC} follow.
			
			Consider \ref{lem:PiD}. Let $v^i_lx \notin \{v^i_k v^i_{k+1}, v^i_1 v^i_0\}$ be an edge that joins $P_i$ with a vertex $x$ of $V_{i-1}$. By \ref{lem:PiA}, $x \in C_{i-1}$. In the last paragraph, we observed that $v^i_lx$ is incoming $(j+2)$-colored at a vertex of $P_i$. Also, we showed that for every vertex in $P_i$ the outgoing $j$-colored and the outgoing $(j+1)$-edge are either $v^i_k v^i_{k+1}$, $v^i_1 v^i_0$ or in $P_i$. Thus, we obtain that $v^i_lx$ is unidirected incoming $(j+2)$-colored at a vertex of $P_i$. Assume, for the sake of contradiction, that $x = v^i_0$. Then the path from $v^i_l$ to $v^i_1$ on $P_i$, $v^i_0v^i_1$ and $v^i_lv^i_0$ form an oriented cycle in $T_j \cup T^{-1}_{j+1} \cup T^{-1}_{j+2}$, contradicting Lemma~\ref{lem_felsner_no_cycle}. A similar argument shows that $x \neq v^i_{k+1}$.
		\end{proof}
		
		\section{Spanning Trees with Maximum Degree at Most~4}\label{sec:maxdeg4}
		In this section, we prove our main result. The following new lemma on the structure of minimal Schnyder woods and their ordered path partitions is crucial for this proof.
		For $0 < i \leq s$, let the path $P_i$ \emph{cover} an edge $e$ or a vertex $x$ if $e$ or $x$ is contained in $C_{i-1}$, but not in $C_i$, respectively.
		
		\begin{lemma}\label{lem_all_edges_right}
			Let $G$ be a $\sigma$-internally 3-connected plane graph, $S$ be the minimal Schnyder wood of $G^\sigma$ and $\mathcal{P}^{\green,\blue} = (P_0,\dots,P_s)$ be the ordered path partition that is compatible with $S$. Let $P_i := \{v_1,\dots,v_k\} \neq P_0$ be a path of $\mathcal{P}^{\green,\blue}$ and $v_0$ and $v_{k+1}$ be its left and right neighbor. Then every edge $v_lw \notin \{v_0v_1,v_kv_{k+1}\}$ with $v_l \in P_i$ and $w \in V_{i-1}$ is unidirected, \red-colored and incoming at $v_k$ and $w \notin \{v_0, v_{k+1}\}$.
		\end{lemma}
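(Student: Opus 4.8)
The plan is to build directly on Lemma~\ref{lem:Pi}, which already delivers almost the entire statement, and to isolate the single piece of new information that must come from the \emph{minimality} of $S$. Applying Lemma~\ref{lem:Pi}\ref{lem:PiD} with $j=\green$ (so that $j+2=\red$), every edge $v_lw \notin \{v_0v_1,v_kv_{k+1}\}$ joining $P_i$ to $V_{i-1}$ is already known to be unidirected, \red-colored, directed toward $P_i$, and to satisfy $w \notin \{v_0,v_{k+1}\}$; moreover $w \in C_{i-1}$ by Lemma~\ref{lem:Pi}\ref{lem:PiA}. Thus the only assertion not yet available is the \emph{location of the head}: it remains to show that every such edge is incoming at $v_k$, that is, that no vertex $v_l$ with $l<k$ receives an incoming \red\ edge from $C_{i-1}$ other than through the two excluded edges. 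This is precisely where minimality must enter, since for a general Schnyder wood Lemma~\ref{lem:Pi}\ref{lem:PiD} permits interior attachments.

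First I would set up the contradiction. Suppose some $v_l$ with $l<k$ has an incoming \red\ edge $v_lw$ with $w \in C_{i-1}\setminus\{v_0,v_{k+1}\}$, and among all such edges choose one whose endpoint $w$ is \emph{closest to $v_{k+1}$} along $C_{i-1}$. Let $R$ denote the closed region bounded by the \red\ edge $v_lw$, the subpath $v_l v_{l+1}\cdots v_k v_{k+1}$ of $P_i^e$, and the portion of $C_{i-1}$ running from $v_{k+1}$ back to $w$. The extremal choice of $w$ guarantees that no further \red\ attachment of $v_1,\dots,v_{k-1}$ lands in the interior of $R$. Traversing the boundary of $R$ as $w \to v_l$ (along the \red\ direction, which is incoming at $v_l$), then $v_l \to v_{l+1}\to\cdots\to v_k\to v_{k+1}$ (along the outgoing \green\ directions of the path edges and of $v_kv_{k+1}$, which exist by Lemma~\ref{lem:Pi}\ref{lem:PiC}), and finally from $v_{k+1}$ back to $w$, winds clockwise with $R$ on its right. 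The first two stretches are already oriented consistently with this traversal; only the last stretch along $C_{i-1}$ is problematic, since the contour need not be consistently directed in the primal.

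To repair this I would pass to the completion $\widetilde{G}_S$ and close the walk through the interior of $R$ using dual edges. By Definition~\ref{def:schnyderdual} the dual of each \green-\blue\ path edge $v_mv_{m+1}$ is \red-colored and unidirected, and combining this with Corollary~\ref{cor:crossingvertex} one checks that at the crossing vertex on $v_mv_{m+1}$ the unique outgoing edge is the \red\ dual edge pointing into $R$; the duals of $v_kv_{k+1}$ and of $v_lw$ supply the orientations needed to enter and leave $R$ at the corresponding crossing vertices. Using that the interior of $R$ carries no further \red\ attachment of $v_1,\dots,v_{k-1}$, I would route a directed dual walk along the inside of the lower boundary from the crossing vertex on $v_kv_{k+1}$ to the crossing vertex on $v_lw$, and splice it into the primal boundary. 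The resulting directed cycle in $\widetilde{G}_S$ still winds clockwise, contradicting Lemma~\ref{lem:latticeminimal}, which states that the completion of the minimal Schnyder wood contains no clockwise directed cycle. Hence $l=k$, which is the claim.

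I expect the hard part to be exactly this last step: producing the closing dual walk with the correct, clockwise-preserving orientation and verifying that it stays inside $R$. This requires invoking the precise orientation rules of Definition~\ref{def:schnyderdual} together with Corollary~\ref{cor:crossingvertex} at every crossing vertex, and leaning on the extremal choice of $w$ so that no \red\ edge descending from an interior vertex of $P_i$ obstructs or reroutes the walk out of $R$. I would also use the \green/\blue\ edges on the primal boundary and the acyclicity of $T_\red \cup T_\green^{-1}\cup T_\blue^{-1}$ from Lemma~\ref{lem_felsner_no_cycle} to exclude degenerate closures in which the walk returns to the boundary prematurely or collapses onto a single vertex, in the same spirit as the $u=w$ case is ruled out within the proof of Lemma~\ref{lem:Pi}.
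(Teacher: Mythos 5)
Your first paragraph matches the paper exactly: Lemma~\ref{lem:Pi}\ref{lem:PiA} and \ref{lem:PiD} reduce everything to showing $l=k$, and the missing piece must come from minimality via Lemma~\ref{lem:latticeminimal}, i.e.\ from exhibiting a clockwise directed cycle in the completion $\widetilde{G}_S$. The gap is in the cycle you then build: it does not exist in $\widetilde{G}_S$. The completion subdivides every primal--dual edge pair by a crossing vertex, and by Corollary~\ref{cor:crossingvertex} every crossing vertex has exactly \emph{one} outgoing arc. For a \green-\blue-colored path edge $v_mv_{m+1}$ both primal half-arcs point \emph{into} its crossing vertex (each is outgoing at its primal endpoint), so the unique outgoing arc at that crossing vertex is the \red-colored dual half-arc into the face below $P_i$. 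Hence the stretch $v_l \to v_{l+1} \to \cdots \to v_k \to v_{k+1}$, which you splice into your cycle, is not a directed path in $\widetilde{G}_S$: any directed walk entering the crossing vertex of $v_lv_{l+1}$ is forced down to a face vertex and can never continue to $v_{l+1}$. The same corollary also blocks the ``closing dual walk along the inside of the lower boundary'': the interior of $R$ is subdivided by the unidirected \red-colored edges running from covered vertices up to $v_k$ (the spokes visible in Figure~\ref{fig_illustration_for_definition}), and at the crossing vertex of such an edge both dual half-arcs (\green\ and \blue) are incoming, the unique outgoing arc being the primal \red\ arc toward $v_k$. A dual walk inside $R$ therefore cannot cross the first spoke it meets; it is funneled up to $v_k$. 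So the step you flagged as ``the hard part'' is not merely hard --- the object it is supposed to produce does not exist, and no contradiction is reached.

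The repair is to stay strictly local at $v_l$, and your extremal choice already puts you there: your chosen edge $v_lw$ is necessarily the clockwise-first incoming \red-colored edge at $v_l$, so $v_lw$ and the outgoing \green-colored edge $v_lv_{l+1}$ are consecutive in the rotation at $v_l$ and bound a common face $f$. This is exactly what the paper uses (Figure~\ref{fig_cw_cycle_all_edges_right}): writing $z_1$, $z_2$ for the crossing vertices of $v_lv_{l+1}$ and $v_lw$, the four arcs $v_l \to z_1$ (\green, primal), $z_1 \to f^*$ (\red, the unique outgoing arc at $z_1$), $f^* \to z_2$ (\green; by Definition~\ref{def:schnyderdual} the dual of the unidirected \red\ edge $v_lw$ is \green-\blue-colored with the \green\ direction leaving $f^*$), and $z_2 \to v_l$ (\red, the unique outgoing arc at $z_2$) form a clockwise directed cycle in $\widetilde{G}_S$, contradicting Lemma~\ref{lem:latticeminimal}. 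Note that no region $R$ and no global extremal choice along $C_{i-1}$ are needed at all: whenever some $v_l$ with $l \neq k$ has an incoming \red-colored edge, its clockwise-first such edge yields this corner configuration directly.
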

		\begin{proof}
			Consider any edge $v_lw  \notin \{v_0v_1,v_kv_{k+1}\}$ that is incident to $v_l \in P_i$ and $w \in V_{i-1}$ (see Figure~\ref{fig_cw_cycle_all_edges_right}). 
			By Lemma~\ref{lem:Pi}\ref{lem:PiA}, $w$ is either $v_0$, $v_{k+1}$ or a vertex that is covered by $P_i$. As $v_lw \notin \{v_0v_1, v_kv_{k+1}\}$, $v_lw$ must be \red-colored and incoming at $v_l$ and satisfies $w \notin \{v_0,v_{k+1}\}$ by Lemma~\ref{lem:Pi}\ref{lem:PiD}.
			It thus remains to show that $l = k$.
			
			Assume to the contrary that $l \neq k$ and that $v_lw$ is the clockwise first incoming \red-colored edge at $v_l$ (see Figure~\ref{fig_cw_cycle_all_edges_right}). By Corollary~\ref{cor:crossingvertex}, the dual edge of $v_lv_{l+1}$ is unidirected \red-colored in the completion $\widetilde{G}_S$ of $G$; by Corollary~\ref{cor:crossingvertex}, the dual edge of $v_lw$ is \green-\blue-colored. Those dual edges are relate as depicted in Figure~\ref{fig_cw_cycle_all_edges_right}. Hence, $\widetilde{G}_S$ contains the clockwise cycle (see Figure~\ref{fig_cw_cycle_all_edges_right}), which contradicts the assumption that $S$ is the minimal Schnyder wood.
		\end{proof}
		
		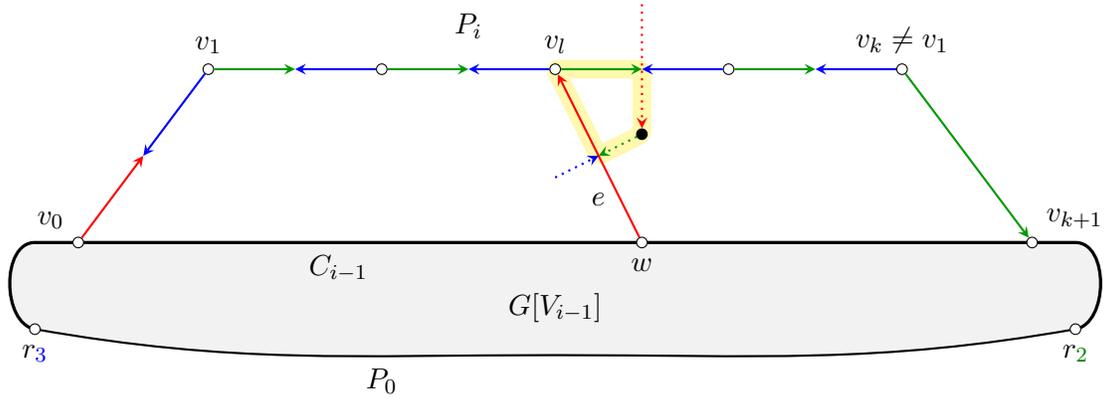
\begin{figure}[!htb]
			\centering
			\begin{tikzpicture}[scale=\textwidth/13cm]
				
				\filldraw[nonarrow, color=black, fill=fillblack] (0,0) to (12,0) to [in=10, out=0] (12,-1) to [in=0, out=-170] (6,-1.3) to [in=-10, out=180] (0,-1) to [in=180, out=170] (0,0);
				\draw[nonarrow, very thick,  color=black] (0,-1) to [in=180, out=170] (0,0) to (12,0) to [in=10, out=0] (12,-1);
				\node[] (labelnode1) at (6,-0.75) {$G[V_{i-1}]$};
				\node[] (labelnode2) at (5,2.5) {$P_i$};
				\node[] (labelnode3) at (6.5,0.5) {$e$};
				\node[] (labelnode3) at (3.5,-0.3) {$C_{i-1}$};
				\node[] (labelnode4) at (4,-1.6) {$P_0$};
				
				\node[rnode, label =below:{$r_\blue$}] (r3) at (0,-1) {};
				\node[rnode, label =below:{$r_\green$}] (r2) at (12,-1) {};
				
				\node[rnode, label =135:{$v_0$}] (v0) at (0.5,0) {};
				\node[rnode, label =above:{$v_1$}] (v1) at (2,2) {};
				\node[rnode] (v2) at (4,2) {};
				\node[rnode, label=above:{$v_l$}] (v3) at (6,2) {};
				\node[rnode] (v4) at (8,2) {};
				\node[rnode, label=above:{$v_k \neq v_1$}] (v5) at (10,2) {};
				\node[rnode, label =45:{$v_{k+1}$}] (v6) at (11.5,0) {};
				
				\foreach \x/\y in {v1/v2, v2/v3, v3/v4, v4/v5}{
					\draw[line, green] (\x) to ($(\x) !0.5! (\y)$);
					\draw[line, blue] (\y) to ($(\x) !0.5! (\y)$);
				}
				
				\draw[line, blue] (v1) to ($(v1) !0.5! (v0)$);
				\draw[line, red] (v0) to ($(v1) !0.5! (v0)$);
				
				\draw[line, green] (v5) to (v6);
				
				\node[rnode, label=below:{$w$}] (x) at (7,0) {};
				\draw[line, red] (x) to (v3);
				
				\node[snode] (y) at (7, 1.25) {};
				\draw[dline, green] (y) to ($(x) !0.5! (v3)$);
				\draw[dline, blue] (6, 0.75) to ($(x) !0.5! (v3)$);
				\draw[dline, red] (7,2.75) to (y);

				\begin{pgfonlayer}{background}
					\draw[nline] (7,2) to (y.center) to ($(x) !0.5! (v3)$) to (v3.center) -- cycle;
				\end{pgfonlayer}
				
			\end{tikzpicture}
			\caption{The clockwise cycle of $\widetilde{G}_S$ of the proof of Lemma~\ref{lem_all_edges_right}, depicted in yellow. }
			\label{fig_cw_cycle_all_edges_right}
		\end{figure}
		
		For a spanning subgraph $T$ of a plane graph $G$, let the \emph{co-graph} $\neg T^*$ be the spanning subgraph $(V^*,(E(G)-E(T))^*)$ of $G^*$. As stated in the introduction, $\neg T^*$ is a spanning tree if $T$ is one and in that case called a co-tree.
		
		\begin{theorem}
			Every $\{r_1, r_2, r_3\}$-internally 3-connected plane graph $G$ contains a 4-tree $T$ whose co-tree $\neg T^*$ is a 4-tree.
			\label{thm_max_deg_4}
		\end{theorem}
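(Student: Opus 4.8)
The plan is to construct the spanning tree $T$ directly from the minimal Schnyder wood $S$ of $G^\sigma$ and its compatible ordered path partition $\mathcal{P}^{\green,\blue}=(P_0,\dots,P_s)$, and then to control the degrees on both the primal and dual side simultaneously using Lemma~\ref{lem_all_edges_right}. The natural candidate for $T$ is the union of the \green-\blue-colored paths together with, for each path $P_i\neq P_0$, exactly one edge connecting $P_i$ to $V_{i-1}$; choosing $v_0v_1$ (the $(j+1)$-colored edge at the left end) as this connecting edge makes $T$ a spanning tree, since each path is attached to the previously-built structure by a single edge and $P_0$ is a path itself. The key point is that this is precisely a ``left-attachment'' of each path, so the edges NOT in $T$ that are incident to $P_i$ from below are exactly $v_kv_{k+1}$ together with the unidirected \red-colored edges incoming at $v_k$ guaranteed by Lemma~\ref{lem_all_edges_right}.

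First I would verify that $T$ is a $4$-tree. Inside a path $P_i$ each vertex has degree at most $2$ from the path edges; the extra contributions come from the single attaching edge $v_0v_1$ (adding one to the degree of $v_1$ and one to $v_0$), and from attaching edges of later paths that land on a vertex of $P_i$ via its incoming \red-colored edges. The crucial structural input is that Lemma~\ref{lem_all_edges_right} forces every such downward non-tree edge to be \red-colored and incoming at the \emph{right endpoint} $v_k$; hence on the tree side a vertex $v_l$ with $l<k$ receives no attaching edges of later paths at all, and I can bound its total $T$-degree by the two path edges plus the one connecting edge, i.e.\ at most $3$, while the endpoints $v_1$ and $v_k$ pick up at most one additional tree edge each, keeping the degree at most $4$.

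Second I would bound the degree in the co-tree $\neg T^*$. By cut-cycle duality $\neg T^*$ is automatically a spanning tree of $G^*$, so only the degree bound requires work. A dual vertex is a face $f$ of $G$, and its co-tree degree equals the number of edges of $f$ that lie \emph{outside} $T$. The plan here is to translate the left-attachment structure into a statement about faces: when $P_i$ is added it covers a set of faces, and the edges of $G[V_i]$ on the boundary of each newly enclosed face consist of path edges (in $T$), the left edge $v_0v_1$ (in $T$), the right edge $v_kv_{k+1}$ (not in $T$), and the incoming \red\ edges at $v_k$ (not in $T$). Using Corollary~\ref{cor:crossingvertex} and Lemma~\ref{lem_all_edges_right} I would argue that each inner face is incident to at most a bounded number of non-tree edges; concretely, since the non-tree edges incident to $P_i$ from below all emanate from the single vertex $v_k$ and appear consecutively in the rotation at $v_k$, they partition into at most four faces around $v_k$, each of which can then collect at most a constant number of such edges.

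The main obstacle I expect is exactly this dual degree bound: while the primal $4$-tree bound follows fairly mechanically once Lemma~\ref{lem_all_edges_right} concentrates all downward edges at $v_k$, showing that no face accumulates five or more non-tree edges requires a careful face-by-face charging argument that tracks, for each face, both its ``lower'' boundary on $C_{i-1}$ and its ``upper'' boundary on $P_i$, and correctly counts the \red-colored edges at $v_k$ together with the right edge $v_kv_{k+1}$. I would handle this by fixing a face $f$ and identifying the unique path $P_i$ whose addition closes $f$, then classifying the edges of $f$ by color and orientation using Lemma~\ref{lem:Pi} and Lemma~\ref{lem_all_edges_right}; the delicate cases are the faces immediately to the left and right of $v_k$, where the \green-\blue\ path edge, the right edge, and a block of \red\ edges can all meet, and I would need to check that planarity and Corollary~\ref{cor:crossingvertex} prevent more than four non-tree edges from bounding any single face.
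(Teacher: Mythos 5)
Your construction of $T$ is indeed a spanning tree (attaching each path $P_i\neq P_0$ to $V_{i-1}$ by the single edge $v_0v_1$ keeps the union connected and acyclic), but the primal degree bound fails, and the failure is precisely at the step where you invoke Lemma~\ref{lem_all_edges_right}. That lemma restricts only the edges between $P_i$ and $V_{i-1}$ \emph{other than} the two attachment edges $v_0v_1$ and $v_kv_{k+1}$; it says nothing about how many \emph{later} paths may have a given vertex as their left neighbor. The attaching edges of later paths are not ``downward non-tree edges landing via incoming \red-colored edges'', as you write: they are the exempted \blue-colored edges $v_0v_1$ of those later paths, i.e.\ exactly the edges you insert into $T$. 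Neither Lemma~\ref{lem_all_edges_right} nor Lemma~\ref{lem:Pi} bounds the number of such edges sharing the same left neighbor $v_0$, so your conclusion that a vertex of $P_i$ (even an endpoint) gains at most one tree edge from later paths is unjustified --- and it is in fact false.

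Here is a concrete counterexample. Let $G$ be the triangulation with outer triangle $r_1r_2r_3$ (clockwise), a vertex $v$ adjacent to all three roots, and vertices $u_1,\dots,u_m$ inside the triangle $vr_2r_1$, where every $u_t$ is adjacent to $v$ and to $r_2$, each $u_t$ is adjacent to $u_{t+1}$, and $u_m$ is adjacent to $r_1$ (so the triangles $vu_tr_2$ are nested, with $u_1$ innermost). Color $v$'s edges \red\ toward $r_1$, \green\ toward $r_2$, \blue\ toward $r_3$, and color $u_t$'s edges \red\ toward $u_{t+1}$ (toward $r_1$ for $t=m$), \green\ toward $r_2$, \blue\ toward $v$; the outer edges are bidirected as usual. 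This satisfies Definition~\ref{def:Schnyderwood}, and it is the \emph{minimal} Schnyder wood: since all inner edges are unidirected, every crossing vertex of the completion sends its unique outgoing edge to a root, to some $b_i$, to $v$, or to a $u_t$ of strictly larger index, and from $v$, from the roots and from the $b_i$ every directed walk in the completion terminates; hence the completion contains no directed cycle at all, in particular no clockwise one. The compatible ordered path partition is $(\{r_2,r_3\},\{v\},\{u_1\},\dots,\{u_m\},\{r_1\})$, and every singleton $\{u_t\}$ has left neighbor $v$, because its outgoing \blue-colored edge is $u_tv$ (its right neighbor is $r_2$, and $u_{t-1}$ gets covered). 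Your tree therefore contains $vr_3$ together with all $m$ edges $u_tv$, so $\deg_T(v)=m+1$, which is unbounded. The mirrored example ($u_t$ adjacent to $r_3$ and $v$, with \green\ toward $v$ and \blue\ toward $r_3$) defeats the symmetric ``right-attachment'' variant in the same way, so no fixed choice of one attachment edge per path can work. This concentration problem is exactly what the paper's proof is built to avoid: it starts from the subgraph $H$ of \emph{bidirected} edges, which has maximum degree $3$ in both $G$ and $G^*$ for free, and then only \emph{deletes} carefully chosen edge sets $D$, $D'$ to break cycles while controlling the degrees of the dual vertices, instead of adding one attachment edge per path.
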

		\begin{proof}
			We sketch the general idea of the proof: First, we identify a spanning candidate graph $H \subseteq G$ such that $\neg H^*$ is a subgraph of $G^*$ that has the same structural properties as $H$. We then define a subset $D$ of the edges of $H$ such that $H-D$ is acyclic and $\neg H^* + D^*$ has maximum degree 4. We use the same arguments to define a similar subset $D'$ for $\neg H^*$. In the end, we need to show that $D'^*$ and $D^*$ do not create new cycles in $\neg H^*$ and $H$, respectively. That way we obtain that the co-graph of $H-D+D'^*$ is $\neg H^* -D' + D^*$, and both graphs are acyclic and of maximum degree~4. Since a spanning subgraph $G'$ of $G$ is connected if and only if $G-E(G')$ does not contain any edge cut of $G$, the cut-cycle duality~\cite[Prop. 4.6.1]{Diestel2012} proves that those two graphs are both connected, which gives the claim.
			
			Let $S$ be the minimal Schnyder wood of $G^{\sigma}$. By Lemma~\ref{lem:latticeminimal}, the completion $\widetilde{G}_S$ of $G$ contains no clockwise directed cycle. Since $\widetilde{G}_S$ contains the completion of the suspended dual $G^{\sigma^*}$ except for its three outer vertices (which do not affect clockwise cycles), $S^*$ is a minimal Schnyder wood of $G^{\sigma^*}$.
			
			Let $H$ be the spanning subgraph of $G$ whose edge set consists of the bidirected edges of $S$. Recall that an edge $e \in E(G)$ is not in $H$ if and only if $e^*$ is in $\neg H^*$. By Definition~\ref{def:schnyderdual}, $\neg H^*$ contains therefore exactly the bidirected edges of $S^*$, except for the three bidirected edges on the outer face boundary of $G^{\sigma^*}$, as these are not dual edges of $G$ (in fact, these three edges appear only in the suspended dual $G^{\sigma^*}$ and were necessary to define dual Schnyder woods).
			
			Since every vertex is incident to at most three bidirected edges by Definition~\ref{def:Schnyderwood}\ref{def:Schnyderwood3} for $S$ and as well for $S^*$, both $H$ and $\neg H^*$ have maximum degree at most three. However, $H$ and $\neg H^*$ may neither be connected nor acyclic. In fact, $H$ contains always the outer face boundary of $G$ as a cycle, as all edges are bidirected by the definition of the first paths of the compatible ordered path partitions $\mathcal{P}^{\red,\green}$, $\mathcal{P}^{\green,\blue}$ and $\mathcal{P}^{\blue,\red}$.
			
			We will therefore iteratively identify edges of cycles of $H$ such that $\neg H^*$ still has maximum degree at most four when those cycles are deleted in $H$.
			In order to do this, we iteratively define edges $D$ and $D'$ that are deleted from $H$ and $\neg H^*$, starting with $D := D' := \emptyset$.

			Let $C$ be a cycle of $H$ and let $(P_0, \dots, P_s)$ be the paths of the compatible ordered path partition $\mathcal{P}^{\green,\blue}$ of $S$. Let $P$ be the path of maximal length in $C$ such that $P \subseteq P_M$ with $M := \max \{i \mid P_i \cap V(C) \neq \emptyset \}$; we call $P$ the \emph{index maximal subpath} of $C$, as it is the fraction of $C$ highest up in the order of $\mathcal{P}^{\green,\blue}$. Since $C$ has only bidirected edges, the statement of Lemma~\ref{lem_all_edges_right} about $e$ being unidirected implies that $P = P_M$ and that $C$ contains the extension of $P$; in particular, $P \in \mathcal{P}^{\green,\blue}$.
			
			Denote by $\mathcal{P}_{max}$ the set of index maximal subpaths of all cycles of $H$.
			For a path $P \in \mathcal{P}_{max} \setminus \{P_s\}$, let $P_L$ with $L := \min\{ i \mid P_i \text{ covers an edge of the extension of $P$} \}$ be the \emph{minimal-covering path} of $P$ (recall that this extension is part of the cycle and the minimal-covering path exists, as $P_s$ is excluded). Denote by $\mathcal{P}_{cover}$ the set of the minimal-covering paths of all index maximal subpaths in $\mathcal{P}_{max} \setminus \{P_s\}$. In particular, $P_s = r_\red$ is the index maximal subpath of the outer face boundary of $G$, which is a bidirected cycle, as shown before. Since no edge of the extension of $P_s$ is covered by another path of $\mathcal{P}^{\green,\blue}$, we add the outgoing \green-colored edge of $r_\red$ to $D$ in order to destroy the outer face cycle.
			
			Next, we process the paths of $\mathcal{P}_{cover}$ in reverse order of $\mathcal{P}^{\green,\blue}$, i.e. from highest to lowest index. Let $P_c = \{v_1, \ldots, v_k\} \in \mathcal{P}_{cover}$ for some $c \in \{1,\ldots, s-1\}$ be the path under consideration. Let $P'_1, \ldots, P'_l$ be the index maximal paths for which $P_c$ is the minimal-covering path, ordered clockwise around the outer face of $G[V_{c-1}]$ (see Figure~\ref{fig_illustration_for_definition}). Let $f_1, \ldots, f_a$ be the faces incident to $v_k$ in counterclockwise order from the outgoing \blue-colored edge to the outgoing \green-colored edge; we say that $f_1, \ldots, f_a$ are \emph{below} $P_c$. For every path of $\{P'_1, \ldots, P'_l\}$, we will add an edge to $D$ that is on the extension of that path. Thus, after having processed every path in $\mathcal{P}_{cover}$ in this way, a cycle in $H$ does not exist in $H-D$ anymore.
			
			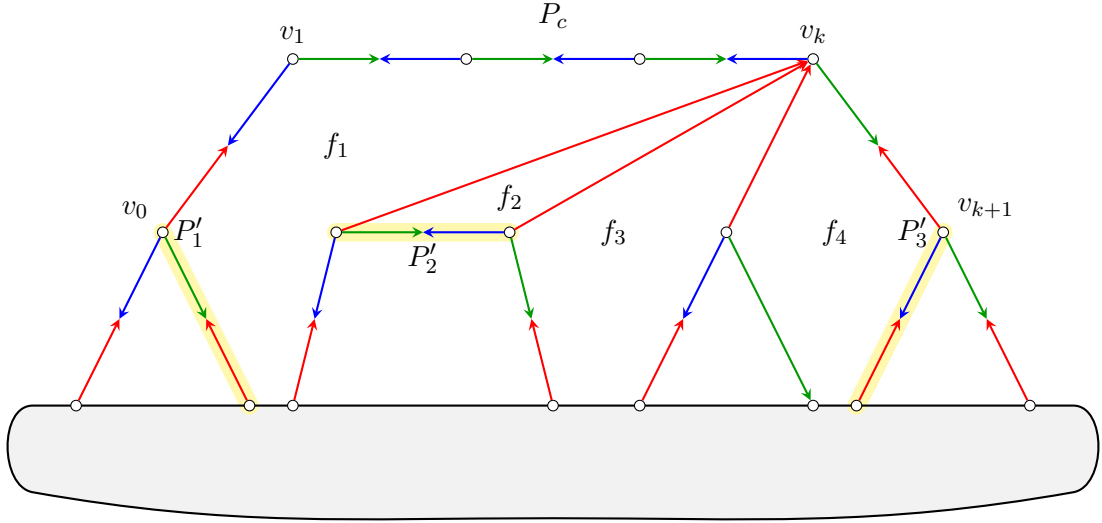
\begin{figure}[!htb]
				\centering
				\begin{tikzpicture}[scale=\textwidth/13cm]
					
					\node[] (labelnode2) at (6,4.5) {$P_c$};
					
					\node[] (labelnode2) at (1.8,2) {$P_1'$};
					\node[] (labelnode2) at (4.5,1.7) {$P_2'$};
					\node[] (labelnode2) at (10.15,2) {$P_3'$};
					
					\node[] (labelnode2) at (3.5,3) {$f_1$};
					\node[] (labelnode2) at (5.5,2.4) {$f_2$};
					\node[] (labelnode2) at (6.7,2) {$f_3$};
					\node[] (labelnode2) at (9.25,2) {$f_4$};

					\node[rnode, label =135:{$v_0$}] (v0) at (1.5,2) {};
					\node[rnode, label =above:{$v_1$}] (v1) at (3,4) {};
					\node[rnode] (v2) at (5,4) {};
					\node[rnode] (v3) at (7,4) {};
					\node[rnode, label=above:{$v_k$}] (v4) at (9,4) {};
					\node[rnode, label =45:{$v_{k+1}$}] (v5) at (10.5,2) {};
					
					\node[rnode] (w0) at (0.5,0) {};
					\node[rnode] (w1) at (2.5,0) {};
					\node[rnode] (w2) at (3,0) {};
					\node[rnode] (w3) at (6,0) {};
					\node[rnode] (w4) at (7,0) {};
					\node[rnode] (w5) at (9,0) {};
					\node[rnode] (w6) at (9.5,0) {};
					\node[rnode] (w7) at (11.5,0) {};
					
					\node[rnode] (i0) at (3.5,2) {};
					\node[rnode] (i1) at (5.5,2) {};
					\node[rnode] (i2) at (8,2) {};
					
					\foreach \x/\y in {v1/v2, v2/v3, v3/v4, i0/i1}{
						\draw[line, green] (\x) to ($(\x) !0.5! (\y)$);
						\draw[line, blue] (\y) to ($(\x) !0.5! (\y)$);
					}
					\foreach \x in {i0, i1, i2}{
						\draw[line, red] (\x) to (v4);
					}
					\foreach \x/\y in {w0/v0, w2/i0, w4/i2, w6/v5}{
						\draw[line, red] (\x) to ($(\x) !0.5! (\y)$);
						\draw[line, blue] (\y) to ($(\x) !0.5! (\y)$);
					}
					\foreach \x/\y in {w1/v0, w3/i1, w7/v5}{
						\draw[line, red] (\x) to ($(\x) !0.5! (\y)$);
						\draw[line, green] (\y) to ($(\x) !0.5! (\y)$);
					}
					
					\draw[line, green] (i2) to (w5);
					
					\draw[line, blue] (v1) to ($(v1) !0.5! (v0)$);
					\draw[line, red] (v0) to ($(v1) !0.5! (v0)$);
					
					\draw[line, green] (v4) to ($(v4) !0.5! (v5)$);
					\draw[line, red] (v5) to ($(v4) !0.5! (v5)$);
					
					\begin{pgfonlayer}{background}
						\filldraw[nonarrow, color=white, fill=fillblack] (0,0) to (12,0) to [in=10, out=0] (12,-1) to [in=0, out=-170] (6,-1.3) to [in=-10, out=180] (0,-1) to [in=180, out=170] (0,0);
						\foreach \x/\y in {v0/w1, i0/i1, v5/w6}{
							\draw[nline, line cap=round] (\x.center) to (\y.center);
						}
						\draw[nonarrow, color=black] (0,0) to (12,0) to [in=10, out=0] (12,-1) to [in=0, out=-170] (6,-1.3) to [in=-10, out=180] (0,-1) to [in=180, out=170] (0,0);
					\end{pgfonlayer}
					
				\end{tikzpicture}
				\caption{Illustration for some of the definitions used in Theorem~\ref{thm_max_deg_4}. If Case~1 applies to $P_c$, we add the edges marked in yellow to $D$.}
				\label{fig_illustration_for_definition}
			\end{figure}
			
			\begin{figure}[!htb]
				\centering
				\begin{tikzpicture}[scale=\textwidth/13cm]
					
					\filldraw[nonarrow, color=black, fill=fillblack] (0,0) to (12,0) to [in=10, out=0] (12,-1) to [in=0, out=-170] (6,-1.3) to [in=-10, out=180] (0,-1) to [in=180, out=170] (0,0);
					\node[] (labelnode2) at (6,4.5) {$P_c$};
					
					\node[] (labelnode2) at (10.15,2) {$P_l'$};
					
					\node[rnode, label =135:{$v_0$}] (v0) at (1.5,2) {};
					\node[rnode, label =above:{$v_1$}] (v1) at (3,4) {};
					\node[rnode] (v2) at (5,4) {};
					\node[rnode] (v3) at (7,4) {};
					\node[rnode, label=above:{$v_k$}] (v4) at (9,4) {};
					\node[rnode, label =45:{$v_{k+1}=w_1$}] (v5) at (10.5,2) {};
					
					\node[rnode] (w0) at (0.5,0) {};
					\node[rnode] (w1) at (2.5,0) {};
					\node[rnode] (w2) at (3,0) {};
					\node[rnode] (w3) at (6,0) {};
					\node[rnode, label =below:{$w_0$}] (w6) at (9.5,0) {};
					\node[rnode] (w7) at (11.5,0) {};
					
					\node[rnode] (i0) at (3.5,2) {};
					\node[rnode] (i1) at (5.5,2) {};
					
					\node[cnode] (v4v5) at ($(v4) !0.5! (v5)$) {};
					\node[cnode] (w6v5) at ($(w6) !0.5! (v5)$) {};
					
					\node[snode, label =180:{$f_a^*$}] (fa) at (9,2) {};
					
					\node[cnode] (ro) at ($(fa) !1.5! (v4v5)$) {};
					\node[cnode] (go) at ($(fa) !1.5! (w6v5)$) {};

					\foreach \x/\y in {v1/v2, v2/v3, v3/v4, i0/i1}{
						\draw[line, green] (\x) to ($(\x) !0.5! (\y)$);
						\draw[line, blue] (\y) to ($(\x) !0.5! (\y)$);
					}
					\foreach \x in {i0, i1}{
						\draw[line, red] (\x) to (v4);
					}
					\foreach \x/\y in {w0/v0, w2/i0, w6/v5}{
						\draw[line, red] (\x) to ($(\x) !0.5! (\y)$);
						\draw[line, blue] (\y) to ($(\x) !0.5! (\y)$);
					}
					\foreach \x/\y in {w1/v0, w3/i1, w7/v5}{
						\draw[line, red] (\x) to ($(\x) !0.5! (\y)$);
						\draw[line, green] (\y) to ($(\x) !0.5! (\y)$);
					}
					
					\draw[line, blue] (v1) to ($(v1) !0.5! (v0)$);
					\draw[line, red] (v0) to ($(v1) !0.5! (v0)$);
					
					\draw[line, green] (v4) to ($(v4) !0.5! (v5)$);
					\draw[line, green] ($(v4) !0.5! (v5)$) to (v5);
					
					\draw[dline, green] (w6v5) to (fa);
					\draw[dline, blue] (fa) to (v4v5);
					\draw[dline, red] (ro) to (v4v5);
					\draw[dline, green] (go) to (w6v5);

					\begin{pgfonlayer}{background}
						\draw[nline] (fa.center) to (v4v5.center) to (v5.center) to (w6v5.center) -- cycle;
					\end{pgfonlayer}
					
				\end{tikzpicture}
				\caption{If $v_kv_{k+1}$ is \green-colored, then $\widetilde{G}_S$ contains a clockwise cycle (depicted in yellow).}
				\label{fig_cw_cycle_if_1col_23_col}
			\end{figure}
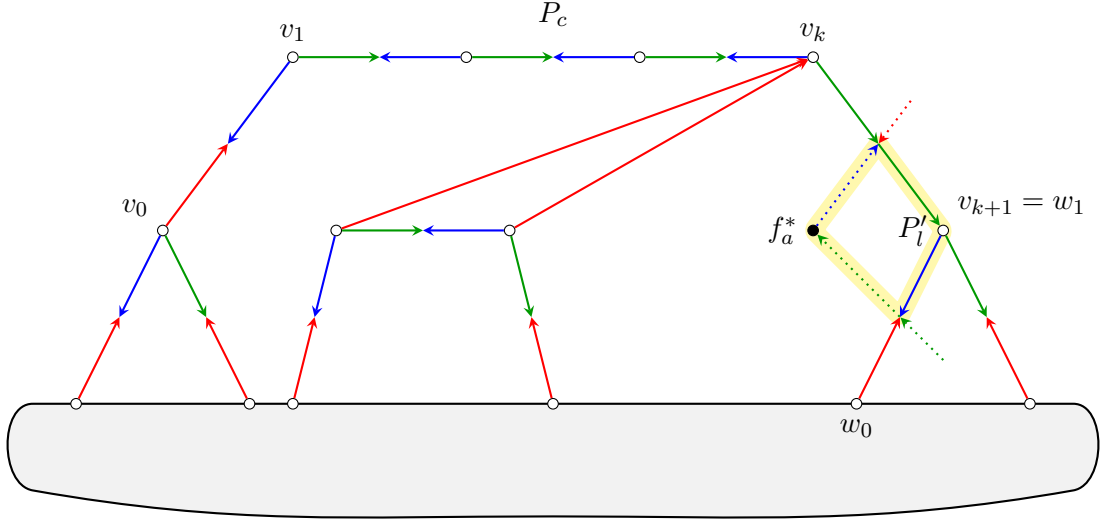
			
			Consider the case that $v_{k+1} = w_1$ for a path $P'_l = \{w_1, \ldots, w_t\}$. Assume to the contrary that then $v_kv_{k+1}$ is not \red-\green-colored. Since $P'_l$ is an index maximal subpath, $w_0w_1$ is \red-\blue-colored. By Lemma~\ref{lem:Pi}\ref{lem:PiC}, then $v_kv_{k+1}$ is unidirected \green-colored. By Corollary~\ref{cor:crossingvertex}, this implies that $(v_kv_{k+1})^*$ is \red-\blue-colored. Hence, $\widetilde{G}_S$ contains the clockwise cycle in Figure~\ref{fig_cw_cycle_if_1col_23_col}, which contradicts the assumption that $S$ is the minimal Schnyder wood. We conclude that $v_kv_{k+1}$ is \red-\green-colored in that case.
			
			Now, we select one edge from each of the extensions of the paths $P_1', \ldots, P_l'$ and add it to $D$.  We select those edges that have smallest possible impact on the maximum degree of the dual graph. Thus edges of the $P_1', \ldots, P_l'$ themselves that are covered by $P_c$ are always preferable (see Figure~\ref{fig_illustration_for_definition}). In Figure~\ref{fig_illustration_for_definition}, the edge of $P'_2$ causes a higher degree at a dual vertex below $P'_2$ and at $f^*_2$, but $f_2$ is a triangle and thus the degree of $f^*_2$ never exceeds 3. If for example $v_0v_1 \in D$, this raises the degree of $f^*_1$. Thus, we try to pick edges that are not incident to $f^*_1$, i.e. if we cannot choose an edge of a path itself, we choose the edge to its right neighbor. This motivation results in the following procedure. We distinguish two cases.
			\begin{description}
				\item[Case 1:] $P_c$ is not an index maximal subpath (see Figure~\ref{fig_illustration_for_definition}).\\
				For every $i \in \{1, \ldots, l\}$, if $P_c$ covers an edge of $G[P_i']$, then we add one such edge to $D$. If for $P_l' = \{w_1, \ldots, w_t\}$ we have $w_1 = v_{k+1}$ (note that this excludes the previous condition), then we add $w_0w_1$ to $D$. For all remaining $i \in \{1, \ldots, l\}$ for which none of the above conditions apply, we set $P_i' = \{u_1, \ldots, u_t\}$ and add the edge $u_tu_{t+1}$ to $D$.
				\item[Case 2:] $P_c$ is an index maximal subpath.\\
				Since the minimal-covering path of $P_c$ has higher index than $P_c$ itself, there already is either an edge of $G[P_c]$, $v_0v_1$ or $v_kv_{k+1}$ in $D$.
				\begin{description}
					\item[Case 2.1:] An edge of $G[P_c]$ or $v_0v_1$ is in $D$ (see Figure~\ref{fig_Pc_path}).\\
					We proceed as in Case~1.
					\item[Case 2.2:] $v_kv_{k+1} \in D$ (see Figure~\ref{fig_Pc_singleton_and_ind_max})\\
					For every $i \in \{1, \ldots, l\}$, if $P_c$ covers an edge of $G[P_i']$, then we add one such edge to $D$. If for $P_1' = \{p_1, \ldots, p_b\}$ we have $p_b = v_0$ (note that this excludes the previous condition), then we add $p_bp_{b+1}$ to $D$. For all remaining $i \in \{1, \ldots, l\}$ for which none of the above conditions apply, we set $P_i' = \{u_1, \ldots, u_t\}$ and add the edge $u_0u_1$ to $D$.
				\end{description}
			\end{description}
			
			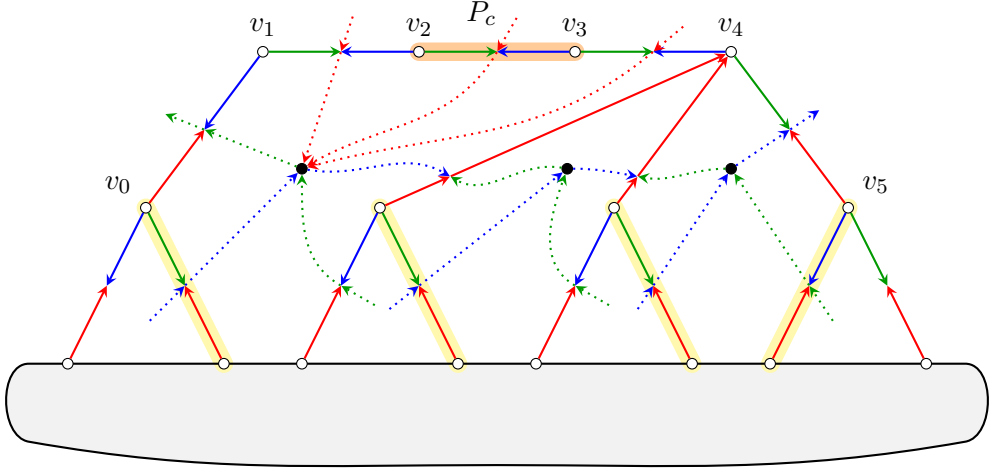
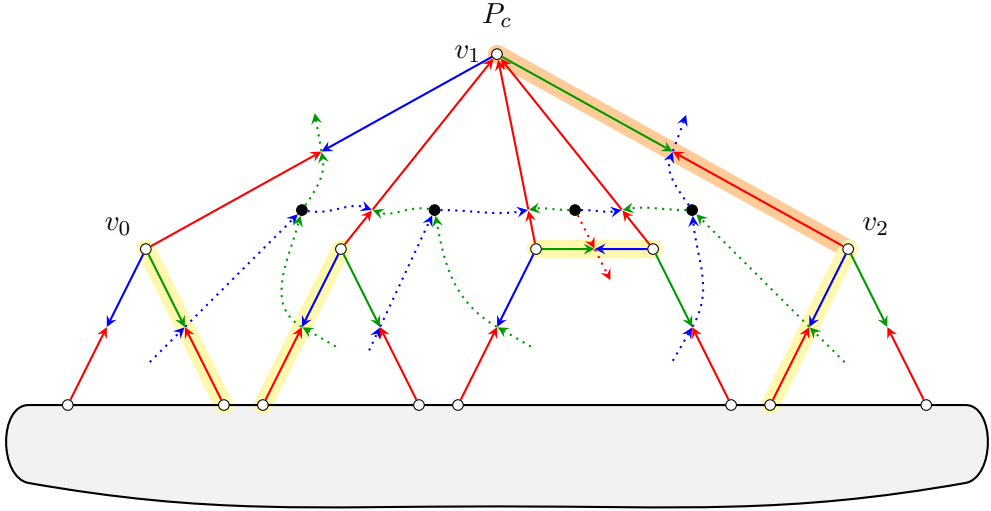
\begin{figure}[!htb]
				\centering
				\begin{subfigure}{.9\textwidth}
					\centering
					\begin{tikzpicture}[scale=\textwidth/13cm]

						\node[] (labelnode2) at (5.8,4.5) {$P_c$};

						\node[rnode, label =135:{$v_0$}] (v0) at (1.5,2) {};
						\node[rnode, label =above:{$v_1$}] (v1) at (3,4) {};
						\node[rnode, label =above:{$v_2$}] (v2) at (5,4) {};
						\node[rnode, label =above:{$v_3$}] (v3) at (7,4) {};
						\node[rnode, label=above:{$v_4$}] (v4) at (9,4) {};
						\node[rnode, label =45:{$v_{5}$}] (v5) at (10.5,2) {};
						
						\node[rnode] (w0) at (0.5,0) {};
						\node[rnode] (w1) at (2.5,0) {};
						\node[rnode] (w2) at (3.5,0) {};
						\node[rnode] (w3) at (5.5,0) {};
						\node[rnode] (w4) at (6.5,0) {};
						\node[rnode] (w5) at (8.5,0) {};
						\node[rnode] (w6) at (9.5,0) {};
						\node[rnode] (w7) at (11.5,0) {};
						
						\node[rnode] (i0) at (4.5,2) {};
						\node[rnode] (i1) at (7.5,2) {};
						
						\node[snode] (f1) at (3.5,2.5) {};
						\node[snode] (f2) at (6.9,2.5) {};
						\node[snode] (f3) at (9,2.5) {};
						
						\node[cnode] (i0v4) at ($(i0) !0.2! (v4)$) {};
						\node[cnode] (i1v4) at ($(i1) !0.2! (v4)$) {};
						
						\foreach \x/\y in {v0/v1, v1/v2, v2/v3, v3/v4, v4/v5, w1/v0, w2/i0, w3/i0, w4/i1, w5/i1, w6/v5}{
							\node[cnode] (\x\y) at ($(\x) !0.5! (\y)$) {};
						}
						
						\foreach \x/\y in {v1/v2, v2/v3, v3/v4}{
							\draw[line, green] (\x) to ($(\x) !0.5! (\y)$);
							\draw[line, blue] (\y) to ($(\x) !0.5! (\y)$);
						}
						\foreach \x in {i0, i1}{
							\draw[line, red] (\x) to (\x v4);
							\draw[line, red] (\x v4) to (v4);
						}
						\foreach \x/\y in {w0/v0, w2/i0, w4/i1, w6/v5}{
							\draw[line, red] (\x) to ($(\x) !0.5! (\y)$);
							\draw[line, blue] (\y) to ($(\x) !0.5! (\y)$);
						}
						\foreach \x/\y in {w1/v0, w3/i0, w5/i1, w7/v5, v5/v4}{
							\draw[line, red] (\x) to ($(\x) !0.5! (\y)$);
							\draw[line, green] (\y) to ($(\x) !0.5! (\y)$);
						}
						
						\draw[line, blue] (v1) to ($(v1) !0.5! (v0)$);
						\draw[line, red] (v0) to ($(v1) !0.5! (v0)$);
						
						\draw[dline, blue] (w1v0) to (f1);
						\draw[dline, green] (w2i0) to [out=150, in=-90] (f1);
						\draw[dline, green] (w2i0)+(-30:0.5) to (w2i0);
						\draw[dline, blue] (w3i0) to (f2);
						\draw[dline, green] (w4i1) to [out=150, in=-110] (f2);
						\draw[dline, green] (w4i1)+(-30:0.5) to (w4i1);
						\draw[dline, blue] (w5i1) to (f3);
						\draw[dline, green] (w6v5) to (f3);
						
						\draw[dline, red] (v1v2) to (f1);
						\draw[dline, red] (v2v3) to [out=-120, in=30] (f1);
						\draw[dline, red] (v2v3)+(60:0.5) to (v2v3);
						\draw[dline, red] (v3v4) to [out=-140, in=15] (f1);
						\draw[dline, red] (v3v4)+(40:0.5) to (v3v4);
						
						\draw[dline, green] (f1) to (v0v1);
						\draw[dline, blue] (f3) to (v4v5);
						
						\draw[dline, blue] ($(f1) !1.3! (w1v0)$) to (w1v0);
						\draw[dline, blue] ($(f2) !1.2! (w3i0)$) to (w3i0);
						\draw[dline, blue] ($(f3) !1.2! (w5i1)$) to (w5i1);
						\draw[dline, green] ($(f3) !1.3! (w6v5)$) to (w6v5);
						
						\draw[dline, green] (v0v1) to ($(f1) !1.4! (v0v1)$);
						\draw[dline, blue] (v4v5) to ($(f3) !1.5! (v4v5)$);
						
						\draw[dline, red] ($(f1) !1.3! (v1v2)$) to (v1v2);
						
						\draw[dline, blue] (f1) to [out=-10, in=150] (i0v4);
						\draw[dline, green] (f2) to [out=170, in=-30] (i0v4);
						\draw[dline, blue] (f2) to [out=0, in=170] (i1v4);
						\draw[dline, green] (f3) to [out=180, in=-10] (i1v4);
						
						\begin{pgfonlayer}{background}
							\filldraw[nonarrow, color=white, fill=fillblack] (0,0) to (12,0) to [in=10, out=0] (12,-1) to [in=0, out=-170] (6,-1.3) to [in=-10, out=180] (0,-1) to [in=180, out=170] (0,0);
							\foreach \x/\y in {w1/v0, w3/i0, w5/i1, w6/v5}{
								\draw[nline, line cap=round] (\x.center) to (\y.center);
							}
							\draw[n2line, line cap=round] (v2.center) to (v3.center);
							\draw[nonarrow, color=black] (0,0) to (12,0) to [in=10, out=0] (12,-1) to [in=0, out=-170] (6,-1.3) to [in=-10, out=180] (0,-1) to [in=180, out=170] (0,0);
						\end{pgfonlayer}
						
					\end{tikzpicture}
					\caption{The situation in Case~2.1. Here the edge $v_2v_3$ is marked in orange and in $D$ before we consider $P_c$. The edges that we add to $D$ are marked in yellow.}
					\label{fig_Pc_path}
				\end{subfigure}
				\begin{subfigure}{.9\textwidth}
					\centering
					\begin{tikzpicture}[scale=\textwidth/13cm]
						\node[] (labelnode2) at (6,5) {$P_c$};
						
						\node[rnode, label =135:{$v_0$}] (v0) at (1.5,2) {};
						\node[rnode, label =left:{$v_1$}] (v1) at (6,4.5) {};
						\node[rnode, label =45:{$v_{2}$}] (v5) at (10.5,2) {};
						
						\node[rnode] (w0) at (0.5,0) {};
						\node[rnode] (w1) at (2.5,0) {};
						\node[rnode] (w2) at (3,0) {};
						\node[rnode] (w3) at (5,0) {};
						\node[rnode] (w4) at (5.5,0) {};
						\node[rnode] (w5) at (9,0) {};
						\node[rnode] (w6) at (9.5,0) {};
						\node[rnode] (w7) at (11.5,0) {};
						
						\node[rnode] (i0) at (4,2) {};
						\node[rnode] (i1) at (6.5,2) {};
						\node[rnode] (i2) at (8,2) {};
						
						\node[snode] (f1) at (3.5,2.5) {};
						\node[snode] (f2) at (5.2,2.5) {};
						\node[snode] (f3) at (7,2.5) {};
						\node[snode] (f4) at (8.5,2.5) {};
						
						\node[cnode] (i0v1) at ($(i0) !0.2! (v1)$) {};
						\node[cnode] (i1v1) at ($(i1) !0.2! (v1)$) {};
						\node[cnode] (i2v1) at ($(i2) !0.2! (v1)$) {};
						
						\foreach \x/\y in {v0/v1, v1/v5, w1/v0, w2/i0, w3/i0, w4/i1, w5/i2, i1/i2, w6/v5}{
							\node[cnode] (\x\y) at ($(\x) !0.5! (\y)$) {};
						}
						
						\foreach \x/\y in {i1/i2}{
							\draw[line, green] (\x) to ($(\x) !0.5! (\y)$);
							\draw[line, blue] (\y) to ($(\x) !0.5! (\y)$);
						}
						\foreach \x in {i0, i1, i2}{
							\draw[line, red] (\x) to (\x v1);
							\draw[line, red] (\x v1) to (v1);
						}
						\foreach \x/\y in {w0/v0, w2/i0, w4/i1, w6/v5, v0/v1}{
							\draw[line, red] (\x) to ($(\x) !0.5! (\y)$);
							\draw[line, blue] (\y) to ($(\x) !0.5! (\y)$);
						}
						\foreach \x/\y in {w1/v0, w3/i0, w5/i2, w7/v5, v5/v1}{
							\draw[line, red] (\x) to ($(\x) !0.5! (\y)$);
							\draw[line, green] (\y) to ($(\x) !0.5! (\y)$);
						}

						\draw[dline, blue] (w1v0) to (f1);
						\draw[dline, blue] ($(f1) !1.3! (w1v0)$) to (w1v0);
						\draw[dline, green] (w2i0) to [out=150, in=-105] (f1);
						\draw[dline, green] (w2i0)+(-30:0.5) to (w2i0);
						\draw[dline, blue] (w3i0) to (f2);
						\draw[dline, blue] ($(f2) !1.2! (w3i0)$) to (w3i0);
						\draw[dline, green] (w4i1) to [out=150, in=-80] (f2);
						\draw[dline, green] (w4i1)+(-30:0.5) to (w4i1);
						\draw[dline, blue] (w5i2) to [out=60, in=-85] (f4);
						\draw[dline, blue] (w5i2)+(-120:0.5) to (w5i2);
						\draw[dline, green] (w6v5) to (f4);
						\draw[dline, green] ($(f4) !1.3! (w6v5)$) to (w6v5);
						
						\draw[dline, green] (f1) to [out=55, in=-80] (v0v1);
						\draw[dline, green] (v0v1) to +(100:0.5);
						\draw[dline, blue] (f4) to [out=125, in=-110] (v1v5);
						\draw[dline, blue] (v1v5) to +(70:0.5);
						
						\draw[dline, blue] (f1) to [out=-10, in=160] (i0v1);
						\draw[dline, green] (f2) to [out=170, in=-20] (i0v1);
						\draw[dline, blue] (f2) to [out=0, in=-170] (i1v1);
						\draw[dline, green] (f3) to [out=180, in=10] (i1v1);
						\draw[dline, blue] (f3) to [out=0, in=-170] (i2v1);
						\draw[dline, green] (f4) to [out=180, in=10] (i2v1);
						
						\draw[dline, red] (f3) to (i1i2);
						\draw[dline, red] (i1i2) to ($(f3) !1.8! (i1i2)$);
						
						\begin{pgfonlayer}{background}
							\filldraw[nonarrow, color=white, fill=fillblack] (0,0) to (12,0) to [in=10, out=0] (12,-1) to [in=0, out=-170] (6,-1.3) to [in=-10, out=180] (0,-1) to [in=180, out=170] (0,0);
							
							\draw[n2line, line cap=round] (v1.center) to (v5.center);
							
							\foreach \x/\y in {w1/v0, w2/i0, i1/i2, w6/v5}{
								\draw[nline, line cap=round] (\x.center) to (\y.center);
							}
							
							\draw[nonarrow, color=black] (0,0) to (12,0) to [in=10, out=0] (12,-1) to [in=0, out=-170] (6,-1.3) to [in=-10, out=180] (0,-1) to [in=180, out=170] (0,0);
						\end{pgfonlayer}
						
					\end{tikzpicture}
					\caption{The situation in Case~2.2. The edge $v_1v_2$ is marked in orange and in $D$ before we consider $P_c$. The edges that we then add to $D$ are marked in yellow.}
					\label{fig_Pc_singleton_and_ind_max}
				\end{subfigure}
				\caption{Subcases for which $P_c$ is an index maximal subpath in Theorem~\ref{thm_max_deg_4}.}
				\label{fig_illustration_for_proof}
			\end{figure}
			
			We now need to show that the maximum degree of $\neg H^*+D^*$ does not exceed 4. We now prove that, after having processed $P_c$, no more boundary edges of any $f \in \{f_1, \ldots , f_a\}$ are added to $D$: Assume to the contrary that there is a face $f \in \{f_1, \ldots , f_a\}$ and an edge $e$ on the boundary of $f$ such that $e$ is not in $D$ after having processed $P_c$ but will be added later. Let $P_i \in \mathcal{P}^{\green,\blue}$ be the path whose extension contains $e$. Then the minimal-covering path $P_{c'} \in \mathcal{P}^{\green,\blue}$ of $P_i$ needs to have lower index than $P_c$, i.e. $c' < c$. As $e$ is covered by $P_c$, it is not covered by the minimal-covering path of $P_i$. Hence $e$ will not be added to $D$, which is a contradiction.
			
			First, consider the case $a \neq 1$, i.e. there at least two faces below $P_c$. By Definition~\ref{def:opp}\ref{def:opp_at_least_one_neighbor}, every $f_j$, $j \in \{1, \ldots, a\}$ has at most two edges of extensions of paths in $\{P'_1, \ldots, P'_a\}$ on the boundary. For $j \in \{2, \ldots, a-1\}$ we add at most one of those edges to $D$ and hence $\deg_{\neg H^*+D^*}(f_j^*) \leq 4$ for every $j \in \{2, \ldots, a-1\}$ (see Figure~\ref{fig_illustration_for_proof}).
			
			So let us consider $f_1^*$. Let $P_1' = \{p_1, \ldots, p_b\}$. In Case~1 we add at most one edge of the boundary of $f_1$ to $D$, hence $\deg_{\neg H^*+D^*}(f_1^*) \leq 4$. In Case~2 we know that $v_0v_1$ is \red-\blue-colored since $P_c$ is an index maximal subpath. So by Corollary~\ref{cor:crossingvertex} $(v_0v_1)^*$ is unidirected \green-colored and outgoing at $f_1^*$ and $\deg_{\neg H^*}(f_1^*) \leq 2$. We add at most two edges of the boundary of $f_1$ to $D$ and hence $\deg_{\neg H^*+D^*}(f_1^*) \leq 4$ (see Figure~\ref{fig_illustration_for_proof} for illustration).
			
			Consider $f_a^*$. Let $P_l' = \{w_1, \ldots, w_t\}$. If $v_kv_{k+1}$ is \red-\green-colored, then, by Corollary~\ref{cor:crossingvertex}, $(v_kv_{k+1})^*$ is unidirected \blue-colored and outgoing at $f_a^*$ and hence $\deg_{\neg H^*}(f_a^*) \leq 2$. We add at most two edges of the boundary of $f_a$ to $D$ and hence $\deg_{\neg H^*+D^*}(f_a^*) \leq 4$. So assume that $v_kv_{k+1}$ is unidirected \green-colored. Then $P_c$ is not an index maximal subpath and we are in Case~1. As we observed above $w_1 \neq v_{k+1}$. Hence, we add at most one edge of the boundary of $f_a$ to $D$ and we have that $\deg_{\neg H^*+D^*}(f_a^*) \leq 4$.
			
			Consider the case $a=1$, i.e. there is exactly one face below $P_c$. If $P_c$ is an index maximal subpath, then, by the same arguments as above, we know that $(v_kv_{k+1})^*$ and $(v_1v_0)^*$ are unidirected and outgoing at $f_1^*$. So $\deg_{\neg H^*}(f_1^*) \leq 1$. We add at most three edges of the boundary of $f_1$ to $D$. Those potential edges are an edge of the extension of $P_c$, an edge of the extension of $P'_1$ and an edge of the extension of $P'_2$. If $P_c$ is not an index maximal subpath, then we can use the same arguments which we used to show that $\deg_{\neg H^*+D^*}(f_a^*) \leq 4$ for $a \neq 1$.
			
			Observe that there are faces that are never below a path of $\mathcal{P}_{cover}$. For those faces there is at most one edge of the boundary in $D$. Thus their dual vertices in $\neg H^*+D^*$ have degree at most 4 (see Figure~\ref{fig_illustration_for_proof}).
			
			The clockwise path from $r_2$ to $r_3$ on the outer face boundary is not an index maximal subpath. So no edge on the counterclockwise path from $r_2$ to $r_3$ on the outer face boundary is in $D$. And the only edge which is in $D$ and on the boundary of the outer face is the outgoing \green-colored edge at $r_1$.
			
			So we showed that $H - D$ is acyclic and $\neg H^*+D^*$ has maximum degree at most 4. We now apply the same arguments to $\neg H^* \cup \{b_1b_2, b_2b_3, b_3b_1\}$ obtaining $D'$. The vertices $b_1$, $b_2$ and $b_3$ are the roots of ${G^\sigma}^*$, see Definition~\ref{def:schnyderdual}\ref{def:schnyderdual3}. The edges $b_1b_2$, $b_2b_3$ and $b_3b_1$ are not in $G^*$ and there is only one edge on the boundary of the outer face of $G$ that is also in $D$. Thus we may disregard $b_1b_2$, $b_2b_3$ and $b_3b_1$ in the following and freely switch from $\neg H^* \cup \{b_1b_2, b_2b_3, b_3b_1\}$ to $\neg H^*$.
			
			As shown above the graphs $\neg H^*-D' +D^*$ and $H -D +D'^*$ have maximum degree at most 4 and by construction $\neg H^*-D' +D^* = \neg (H-D+D'^*)^*$. An edge set $E \subseteq E(G)$ is the edge set of a cycle  in $G$ if and only if the edge set $E^*$ is a minimal cut in $G^*$ \cite[Prop. 4.6.1]{Diestel2012}. So in order to show that $\neg H^*-D' +D^*$ and $H-D+D'^*$ are both trees it suffices to show that they are both acyclic. We show that $\neg H^*-D' +D^*$ is acyclic. As before the same arguments work for $H-D+D'^*$. 
			
			For the sake of contradiction, assume that there is a cycle $C$ in $\neg H^*-D' +D^*$. By construction, every cycle in $\neg H^*$ has at least one edge which is also in $D'$. Hence $C$ has at least one edge of $D^*$. Remember that every edge of $D$ is on a cycle of $H$. So by~\cite[Prop. 4.6.1]{Diestel2012} every edge in $D^*$ joins two vertices of two different connected components of $\neg H^*$. 
			
			For a connected component $K$ of $\neg H^*$ let $E_K \subseteq E(G^*)$ be the the minimal cut separating $K$ and $G^*-K$. Let $C_K$ be the cycle of $G$ with $E(C_K) = E_K^*$ and let $P^{C_K} = P_i \in \mathcal{P}^{\green,\blue}$ be the index maximal subpath of $C_K$. Choose $K$ such that $P^{C_K} = P_i$ has smallest index. Since $C$ is a cycle there are two edges $e,e' \in E_K$ that are also in $C$.
			
			Remember that for each index maximal subpath in $\mathcal{P}_{max}$ we pick exactly one edge of the extension and add it to $D$. So either $e^*$ or $e'^*$ is not in the extension of the index maximal subpath $P^{C_K}$. Assume w.l.o.g. that $e^*$ is not in the extension of $P^{C_K}$. Let $P' = P_j \in \mathcal{P}^{\green,\blue}$, $j \in \{1,\dots, s\}$ be the index maximal path  such that $e^*$ is in the extension of $P'$. Since $P^{C_K}$ is the index maximal subpath of $C_K$ we have that $j < i$. So there exists a connected component $K'$ of $\neg H^*$ such that $K'$ and $C$ have a vertex in common and $P'$ is the index maximal subpath of the cycle $C_{K'}$ with $(E(C_{K'}))^*$ being the minimal cut separating $K'$ and $G^*-K'$. This contradicts the definition of $K$. So $\neg H^*-D' +D^*$ and $H-D+D'^*$ are our desired trees.	
		\end{proof}
		
		\begin{corollary}
			Every 3-connected planar graph $G$ contains a 4-tree $T$ whose co-tree $\neg T^*$ is also a 4-tree.
		\end{corollary}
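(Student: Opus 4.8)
The plan is to obtain the corollary as a direct specialization of Theorem~\ref{thm_max_deg_4}. The only gap to bridge is that the theorem is phrased for $\{r_1,r_2,r_3\}$-internally 3-connected \emph{plane} graphs, whereas the corollary speaks of 3-connected \emph{planar} graphs; so the task reduces to fixing an embedding, selecting suitable roots, and verifying that the resulting plane graph lies in the more general class.

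First I would fix a planar embedding of $G$. Such an embedding exists since $G$ is planar, and since $G$ is 3-connected it is unique up to reflection and the choice of outer face by Whitney's theorem; in particular the plane dual $G^*$, and hence the notion of co-tree $\neg T^*$, are well defined so that the statement of the corollary is meaningful. I would then designate an arbitrary face as the outer face. Because $G$ is 3-connected (and therefore has at least four vertices and no separating pair), every facial boundary is a cycle containing at least three distinct vertices; I pick three of them on the outer boundary and label them $r_1,r_2,r_3$ in clockwise order, setting $\sigma := \{r_1,r_2,r_3\}$.

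Next I would check that $G$, so embedded and rooted, is $\sigma$-internally 3-connected. This is precisely the content of the remark following the definition of $\sigma$-internal 3-connectivity, namely that the class of $\sigma$-internally 3-connected plane graphs properly contains all 3-connected plane graphs. Concretely, forming the suspension $G^\sigma$ and joining its three half-edges to a single new vertex placed in the outer face produces a graph in which the new vertex has degree three and is adjacent to three vertices of the already 3-connected graph $G$; such an augmentation preserves 3-connectivity, so the augmented graph is 3-connected and $G$ is $\sigma$-internally 3-connected.

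With the hypotheses of Theorem~\ref{thm_max_deg_4} verified, I would simply apply it to obtain a 4-tree $T$ of $G$ whose co-tree $\neg T^*$ is also a 4-tree, which is exactly the assertion of the corollary. I do not expect a genuine obstacle here: the whole argument is a routine reduction, and the single point requiring any care---that a fixed 3-connected plane graph with three chosen outer-boundary vertices is $\sigma$-internally 3-connected---is already granted by the preceding remark, so no new structural work beyond invoking the theorem is needed.
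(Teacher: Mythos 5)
Your proposal is correct and matches the paper's (implicit) argument: the paper states this corollary without proof precisely because, as noted after the definition of $\sigma$-internal 3-connectivity, every 3-connected plane graph is $\sigma$-internally 3-connected for any choice of roots, so Theorem~\ref{thm_max_deg_4} applies directly. Your added details---fixing an embedding, invoking Whitney's theorem to make the dual (and hence the co-tree) well defined, and verifying that attaching a degree-3 vertex to a 3-connected graph preserves 3-connectivity---are exactly the routine steps the paper leaves to the reader.
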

		
		\begin{corollary}
			$r_1$ is a leaf in $H-D+D'^*$ and all edges on the outer face of $G$ except for the outgoing \green-colored edge at $r_1$ are in $H-D+D'^*$. We have $\deg_{H-D+D'^*}(r_3) =2$ and $\deg_{H-D+D'^*}(r_2) \leq 3$. Also the dual vertex of the outer face of $G$ is a leaf in $\neg H^*-D' +D^*$. 
		\end{corollary}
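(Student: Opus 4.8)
The plan is to read off every degree by tracking, edge by edge, the four operations $-D$, $+D'^*$ (for the primal tree $T := H - D + D'^*$) and $-D'$, $+D^*$ (for the co-tree $\neg H^* - D' + D^*$), starting from the degrees in $H$ and $\neg H^*$.

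First I would pin down, using Definition~\ref{def:Schnyderwood}\ref{def:Schnyderwood3} at each root, exactly which edges of $H$ are incident to $r_1,r_2,r_3$. Since the whole outer face boundary of $G$ is a bidirected cycle of $H$, the two outer-boundary edges at each root lie in $H$. Moreover, a bidirected edge at a root must carry one of that root's three outgoing directions; as the outgoing \red-, \green-, \blue-colored directions at $r_i$ are the half-edge together with the two boundary edges, the root can have \emph{no} interior bidirected edge, so $\deg_H(r_i)=2$ for each $i$. Concretely, at $r_1$ the boundary edge toward $r_2$ is \red-\green-colored and carries the outgoing \green-colored direction of $r_1$, while the boundary edge toward $r_3$ is \blue-\red-colored and carries its outgoing \blue-colored direction. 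From the proof of Theorem~\ref{thm_max_deg_4} the unique outer-boundary edge in $D$ is the outgoing \green-colored edge at $r_1$, and since $D'^* \subseteq E(G)\setminus E(H)$ contains no edge of $H$, no boundary edge is added by $D'^*$. This already yields the assertion that every outer-boundary edge except the \green-colored edge at $r_1$ survives in $T$, and gives $\deg_{H-D}(r_1)=1$ and $\deg_{H-D}(r_2)=\deg_{H-D}(r_3)=2$.

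The leaf statement for $f^*$ is the cleanest, so I would do it next. Every edge of $G^*$ incident to $f^*$ is the dual of an outer-boundary edge of $G$; all of these are bidirected in $S$, hence unidirected in $S^*$ by Definition~\ref{def:schnyderdual}, so none of them lies in $\neg H^*$. Thus $\deg_{\neg H^*}(f^*)=0$, and deleting $D'$ leaves this unchanged. Finally $D^*$ contributes one edge at $f^*$ for each outer-boundary edge of $G$ that lies in $D$, and there is exactly one such edge. Hence $f^*$ has degree $1$ in $\neg H^* - D' + D^*$.

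It then remains to bound the number of edges of $D'^*$ incident to each root. An edge of $D'^*$ at $r_i$ is the dual of an edge of $D'$ lying on the boundary of the face $F_{r_i}$ of $G^*$ dual to the vertex $r_i$. Two boundary edges of $F_{r_i}$ — the duals of the two outer-boundary edges at $r_i$ — are incident to $f^*$ and are unidirected, hence never belong to $D' \subseteq \neg H^*$ (modulo the disregarded triangle $b_1b_2b_3$); so only duals of the \emph{interior} edges at $r_i$ can contribute. The plan is to rerun the face-degree analysis from the proof of Theorem~\ref{thm_max_deg_4}, now in its symmetric form for $\neg H^*$ and the faces of $G^*$, restricted to $F_{r_1},F_{r_2},F_{r_3}$, and to show that no edge of $D'$ lies on $F_{r_1}$ or $F_{r_3}$ while at most one lies on $F_{r_2}$. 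Together with the first paragraph this gives $\deg_T(r_1)=1$ (so $r_1$ is a leaf), $\deg_T(r_3)=2$ and $\deg_T(r_2)\le 3$.

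The hard part will be this last refined count at the three root-faces. The uniform bound $\deg\le 4$ established for Theorem~\ref{thm_max_deg_4} is not sharp enough: here one needs the \emph{exact} number of selected edges landing on $F_{r_1},F_{r_2},F_{r_3}$, which forces a re-examination of the Case~1/Case~2 distinction of the (dual) construction for precisely those minimal-covering paths whose below-faces include a root-face, using that the two $f^*$-incident dual edges of each $F_{r_i}$ are unidirected and outgoing at $f^*$. I expect the asymmetry between $r_2$ and $\{r_1,r_3\}$ to come from the left/right roles of $r_j=r_2$ and $r_{j+1}=r_3$ as the endpoints of $P_0$ — equivalently, from the fact that $f^*$ plays the role of the suspended-dual roots $b_1,b_2,b_3$ and that $P_0^{\green,\blue}$ is never an index-maximal subpath — which is what permits one extra selected edge at $F_{r_2}$ but none at $F_{r_1}$ (where $r_1=P_s$) or $F_{r_3}$.
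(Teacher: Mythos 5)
Your first three paragraphs are sound and essentially match the paper: the outer face boundary of $G$ is a bidirected cycle of $S$ and hence lies in $H$; a root has no interior bidirected edge because its three outgoing directions are used by the half-edge and the two boundary edges; the only boundary edge in $D$ is the outgoing \green-colored edge at $r_1$; $D'^*$ consists of duals of edges of $\neg H^*$ and therefore never re-inserts an edge of $H$; and the leaf property of $f^*$ follows because exactly one boundary edge is absent from $H-D+D'^*$. The genuine gap is that everything beyond this --- that \emph{no} edge of $D'$ lies on the boundary of the dual faces of $r_1$ and $r_3$, while \emph{at most one} lies on that of $r_2$ --- is precisely the content of the corollary that does not already follow from Theorem~\ref{thm_max_deg_4}, and you leave it as a plan (``the plan is to \dots show'', ``I expect the asymmetry \dots to come from'') rather than an argument. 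Rerunning the uniform degree-$4$ face analysis cannot produce these exact counts; one must identify which specific paths of the \emph{dual} compatible ordered path partition contain the relevant dual edges and at which step of the selection procedure those edges could be chosen.

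Moreover, the mechanism you guess at is not the right one. For $r_1$, the paper does not use that $\{r_1\}=P_s$ in the primal partition: it uses that every interior edge at $r_1$ is unidirected incoming \red-colored, so its dual is a bidirected \green-\blue-colored edge of $S^*$, and these duals form a single path (namely $P_1$) of the dual partition $\mathcal{P}^{\green,\blue}$ of $S^*$. The extension edges of that path are the duals of the two bidirected boundary edges at $r_1$, hence unidirected in $S^*$; since an index maximal subpath of a cycle of the candidate graph must have its entire extension bidirected on that cycle (the consequence of Lemma~\ref{lem_all_edges_right} used in the theorem), this path is never an index maximal subpath, so none of its edges ever enters $D'$, giving $\deg_{H-D+D'^*}(r_1)=1$. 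For $r_2$ and $r_3$, the duals of the interior edges at these roots are exactly the edges covered by the last singleton path $\{b_1\}$ of the dual partition; only the clockwise first such edge $e_2$ at $r_2$ and the counterclockwise first such edge $e_3$ at $r_3$ can lie in an extension whose minimal-covering path is $\{b_1\}$, and because $b_1b_2$ was put into $D'$ at the outset (the dual analogue of the \green-colored edge at $r_1$), it is Case~2.2 of the procedure that runs at $\{b_1\}$: its choice rule never selects $e_3$ but may select $e_2$. That case analysis --- not the roles of $r_2,r_3$ as endpoints of $P_0$ --- is what yields $\deg_{H-D+D'^*}(r_3)=2$ and $\deg_{H-D+D'^*}(r_2)\le 3$, and it is missing from your proposal.
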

		\begin{proof}
			The proof of Theorem~\ref{thm_max_deg_4} yields that all edges on the outer face of $G$ except for the outgoing \green-colored edge at $r_1$ are in $H-D+D'^*$. In ${G^\sigma}^*$ the path $P_1 \in \mathcal{P}^{\green,\blue}$ is given by the duals of the unidirected incoming \red-colored edges at $r_1$. See Figure~\ref{fig:completion} for illustration. Since the outgoing \green-colored and the outgoing \blue-colored edge at $r_1$ are bidirected, $P_1$ is not an index maximal subpath and hence none of the duals of the unidirected incoming \red-colored edges at $r_1$ is added to $D'$. So $r_1$ is a leaf in $H-D+D'^*$. 
			
			The dual edges of the incoming unidirected edges at $r_2$ and $r_3$ are all covered by the last singleton $b_1$ of $\mathcal{P}^{\green,\blue}$ of $\neg H^* \cup \{b_1b_2, b_2b_3, b_3b_1\}$. See Figure~\ref{fig:completion} for illustration. Let $e_2$ be the dual of the clockwise first unidirected \green-colored incoming edge at $r_2$ and $e_3$ be the dual of the counterclockwise first unidirected \blue-colored incoming edge at $r_3$. Let $I_i$ be the set of the duals of the unidirected $i$-colored incoming edges at $r_i$, $i = \green,\blue$. For $e \in I_i$, $i = \green,\blue$ let $P_{e} \in \mathcal{P}^{\green,\blue}$ be the path such that $e$ belongs to the extension of $P_{e}$. Observe that for all edges $e \in (I_2 \setminus \{e_2\}) \cup (I_3 \setminus \{e_3\})$ $b_1$ is not the minimal-covering path of $P_e$. So those edges are not added to $D'$. On the other hand $b_1$ might be the minimal-covering path of $P_{e_2}$ and/or $P_{e_3}$. Since we added $b_1b_2$ to $D'$ we do not add $e_3$ to $D'$ but might do so for $e_2$. Compare Case~2.2 in the proof of Theorem~\ref{thm_max_deg_4}. Hence $\deg_{H-D+D'^*}(r_3) =2$ and $\deg_{H-D+D'^*}(r_2) \leq 3$.
			
			Since the outgoing \green-colored edge a $r_1$ is the only edge on the boundary of the outer face $f$ which is not in $H-D+D'^*$ we know that the vertex $f^*$ is a leaf in $\neg H^*-D' +D^*$.
		\end{proof}
		
		\begin{remark}\label{rem_int-3-conn}
			There exist internally 3-connected graphs $G_k$ such that every spanning tree of the dual graph has maximum degree at least $\lceil k/2 \rceil$.
		\end{remark}
		\begin{proof}
			In order to define $G_k$ take a cycle $C_k$ on $k$ vertices with fixed embedding. Let $w_0, \dots , w_{k-1}$ be the vertices of the cycle in clockwise order. For every $i =0,\dots,k-1$ add a vertex $p_i$ in the outer face and add edges $p_iw_i$ and $p_iw_{i+1}$ such that the resulting graph $G_k$ is still plane, here indices are modulo $k$. For an illustration see Figure~\ref{fig_Gk}. $G_k$ is internally 3-connected. In the dual of $G_k$ there are multi-edges, i.e. there are vertex pairs that are joined by more than one edge. The graph in which all those vertex pairs are only joined by one edge is the complete bipartite graph $K_{2,k}$. A spanning tree of $K_{2,k}$ has maximum degree at least $\lceil k/2 \rceil$ by pigeonhole principle. 
		\end{proof}
		
		\begin{figure}[!htb]
			\centering
			\begin{tikzpicture}[scale=0.7]
				\node[] (labelnode1) at (0,0) {$f_1$};
				\node[] (labelnode2) at (3,3) {$f_2$};
				
				\foreach \x in {0,...,10}{
					\node[rnode] (\x) at (360/11 * \x : 3) {};
					\node[rnode] (o\x) at (360/11 * \x + 360/22  : 3.5) {};
				}
				
				\foreach \x in {0,...,10}{
					\pgfmathtruncatemacro{\y}{mod(\x+1,11)}
					\draw[nonarrow] (\x) to (\y);
					\draw[nonarrow] (\x) to (o\x);
					\draw[nonarrow] (o\x) to (\y);
				}
				
			\end{tikzpicture}
			\caption{The graph $G_{11}$ of Remark~\ref{rem_int-3-conn}. In a spanning tree of the dual graph $f_1^*$ or $f_2^*$ has degree at least 6.}
			\label{fig_Gk}
		\end{figure}
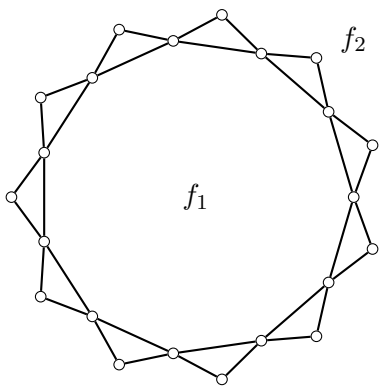
		
		\section{Computational Aspects}\label{sec:computational}
		It is known since 2005 that a minimal Schnyder wood can be computed in linear time $O(n)$~\cite[Section~4, p.\ 60]{Fusy2005}, where $n$ is the number of vertices. Using leftist canonical orderings, an ordered path partition that is compatible to a minimal Schnyder wood can be computed in linear time $O(n)$~\cite[Theorem~7]{Badent2011}. For every path of the compatible ordered path partition, we can detect in time $O(n)$ whether it is the index maximal path of a cycle of the candidate graph $H$. Since the case distinction in our proof, which edges are added to $D$ can be made in linear time for every covering path, we obtain an algorithm with running time $O(n^2)$ to compute a 4-tree whose co-tree is also a 4-tree.
		
		\bibliographystyle{abbrv}
		\bibliography{paper}
\end{document}